\newtheorem{theorem}{Theorem}[section]
\newtheorem{conjecture}[theorem]{Conjecture}
\newtheorem{definition}[theorem]{Definition}
\newtheorem{proposition}[theorem]{Proposition}
\newtheorem{lemma}[theorem]{Lemma}
\newtheorem{corollary}[theorem]{Corollary}
\newtheorem{claim}{Claim}
\newcommand{\qqed}{\hfill $\blacksquare$}
\begin{document}
\thispagestyle{empty}
\begin{center}
\Large
 {\bf On the edge-biclique graph and the iterated edge-biclique operator}
\vspace*{1cm}

\large
\renewcommand{\thefootnote}{\fnsymbol{footnote}}
Leandro Montero\footnote{Corresponding author.} \\
\normalsize
KLaIM team, L@bisen, AIDE Lab., Yncrea Ouest\\33 Q, Chemin du Champ de Man\oe{}uvres\\44470 Carquefou, France\\
\texttt{lpmontero@gmail.com}\\

\vspace*{1cm}

\large
Sylvain Legay\\
 \normalsize
France\\
\vspace*{.5cm}

ABSTRACT
\end{center}

\small A biclique of a graph $G$ is a maximal induced complete bipartite subgraph of $G$. The edge-biclique graph of $G$, $KB_e(G)$, is the 
edge-intersection graph of the bicliques of $G$. A graph $G$ diverges (resp. converges or is periodic) under an operator $H$ whenever 
$\lim_{k \rightarrow \infty}|V(H^k(G))|=\infty$ (resp. $\lim_{k \rightarrow \infty}H^k(G)=H^m(G)$ for some $m$ or $H^k(G)=H^{k+s}(G)$ 
for some $k$ and $s \geq 2$). The iterated edge-biclique graph of $G$, $KB_e^k(G)$, is the graph obtained by applying the edge-biclique operator 
$k$ successive times to $G$. In this paper, we first study the connectivity relation between $G$ and $KB_e(G)$. Next, we study the iterated edge-biclique operator $KB_e$. 
In particular, we give sufficient conditions for a graph to be convergent or divergent under the operator $KB_e$, we characterize the
behavior of \textit{burgeon graphs} and we propose some general conjectures on the subject.

\normalsize
\vspace*{1cm}
{\bf Keywords:} Bicliques; Edge-biclique graphs; Divergent graphs; Iterated graph operators; Graph dynamics
\newpage

\section{Introduction}

Intersection graphs of certain special subgraphs of a general graph have been studied 
extensively. We can mention line graphs (intersection graphs of the edges of a graph), interval graphs (intersection graphs
of a family of subpaths of a path), and in particular, clique graphs (intersection graphs of the family of all maximal cliques of a graph)~\cite{BoothLuekerJCSS1976,BrandstadtLeSpinrad1999,EscalanteAMSUH1973,FulkersonGrossPJM1965,GavrilJCTSB1974,LehotJA1974,McKeeMcMorris1999}.

The \textit{clique graph} of $G$ is denoted by $K(G)$.
Clique graphs were introduced by Hamelink in \cite{HamelinkJCT1968} and characterized in \cite{RobertsSpencerJCTSB1971}. It was proved in \cite{Alc'onFariaFigueiredoGutierrez2006} that the 
clique graph recognition problem is NP-Complete.

The clique graph can be thought as an operator from \textit{Graphs} into \textit{Graphs}. The \textit{iterated clique graph} $K^k(G)$ 
 is the graph obtained by applying the clique operator $k$ successive times. It was introduced by Hedetniemi and Slater in \cite{HedetniemiSlater1972}. 
Much work has been done in the field of the iterated clique operator, looking at the possible different behaviors. The goal is to decide
 whether a given graph converges, diverges, or is periodic under the clique operator when $k$ grows to infinity. This question remains open 
for the general case, moreover, it is not known if it is computable. However, partial characterizations have been given for convergent, divergent
 and periodic graphs, restricted to some classes of graphs. Some of them lead to polynomial time algorithms to solve the problem. 

For the clique-Helly graph class, graphs which are convergent to the trivial graph have been characterized in~\cite{BandeltPrisnerJCTSB1991}.
 Cographs, $P_4$-tidy graphs, and circular-arc graphs  are examples of classes where the different behaviors were also 
characterized~\cite{MelloMorganaLiveraniDAM2006,LarrionMelloMorganaNeumann-LaraPizanaDM2004}. On the other hand, divergent graphs were considered. 
For example, in \cite{Neumann1981}, families of divergent graphs are given. Periodic graphs were studied in
 \cite{EscalanteAMSUH1973,LarrionNeumann-LaraPizanaDM2002}. It has been proved that for every integer $i$, there are graphs with
 period $i$ and graphs which converge in $i$ steps. More results about iterated clique graphs can be found in~\cite{Frias-ArmentaNeumann-LaraPizanaDM2004,LarrionNeumann-LaraGC1997,LarrionNeumann-LaraDM1999,LarrionNeumann-LaraDM2000,LarrionPizanaVillarroel-FloresDM2008,PizanaDM2003}.

A \textit{biclique} is a maximal induced complete bipartite subgraph. 
Bicliques have applications in various fields, for example biology: protein-protein interaction networks~\cite{Bu01052003}, 
social networks: web community discovery~\cite{Kumar}, genetics~\cite{Atluri}, medicine~\cite{Niranjan}, information theory~\cite{Haemers200156}, etc. 
More applications (including some of these) can be found in~\cite{blablamec}.
The \textit{biclique graph} of a graph $G$, denoted by $KB(G)$, is the intersection graph of the family of all bicliques of $G$. 
It was defined and characterized in~\cite{GroshausSzwarcfiterJGT2010}. However no polynomial time algorithm is known for recognizing biclique graphs. 
As for clique graphs, the biclique graph construction can be viewed as an operator between the class of graphs. 

The \textit{iterated biclique graph} $KB^k(G)$ is the graph obtained by applying to $G$ the biclique operator $k$ times iteratively. It was introduced in~\cite{marinayo} and all possible behaviors were characterized. It was proven that a graph is either divergent or convergent, but never periodic (with period bigger than $1$). Also, general characterizations for 
convergent and divergent graphs were given. 
These results were based on the fact that if a graph $G$ contains a clique of size at least $5$, 
then $KB(G)$ or $KB^2(G)$ contains a clique of larger size. Therefore, in that case $G$ diverges. Similarly if $G$ contains the $gem$ or the $rocket$ 
graphs as an induced subgraph, then $KB(G)$ 
contains a clique of size $5$, and again $G$ diverges. Otherwise it was shown that after removing false-twin vertices of $KB(G)$, the resulting graph is a clique on at most $4$ vertices, in which case $G$ converges. Moreover, it was proved that if a graph $G$ converges, it converges to the graphs $K_1$ or $K_3$, and it does so in at most $3$ steps. 
These characterizations led to an $O(n^4)$ time algorithm (later improved to $O(n+m)$ time~\cite{algorapide}) for recognizing convergent or divergent graphs under the biclique operator.

The \textit{edge-biclique graph} of a graph $G$, denoted by $KB_e(G)$, is the edge-intersection graph of the family of all bicliques of $G$. We recall that edge-intersection means that 
$KB_e(G)$ has a vertex for each biclique of $G$ and two vertices are adjacent in $KB_e(G)$ if their corresponding bicliques in $G$ share an edge (and not just a vertex as in $KB(G)$).
The edge-biclique graph $KB_e(G)$ was defined in~\cite{tesismarina} and studied in~\cite{pavol}, however there is no characterization so far to recognize edge-biclique graphs. 

In this work we study edge-biclique graphs not only because of their mathematical interest but also because in real-life problems, bicliques often represent the relation between two types of entities (each partition of the biclique) therefore it would make sense to study when two objects (bicliques) share a common relationship (an edge) more than just an entity (a vertex). 

We first study the relation between $G$ and $KB_e(G)$ in terms of connectivity and we present a polynomial time algorithm to decide if $KB_e(G)$ is connected or not.
In the rest of the paper, we define and focus on the \textit{iterated edge-biclique graph}, denoted by $KB_e^k(G)$, that is, the graph obtained by applying to $G$ the edge-biclique operator $k$ 
times iteratively. We give some non-trivial sufficient conditions for a graph to be convergent or divergent under the $KB_e$ operator
that are based on induced substructures. Later, we study \textit{burgeon graphs} and its relation
with line graphs and edge-biclique graphs\footnote{\textit{Burgeon graphs} have been studied under the name of \textit{inflated graphs} mainly considering the domination problem~\cite{burgeon1,burgeon3,burgeon2}.}. 
We also characterize its behavior under the $KB_e$ operator.
To finish, we propose some conjectures that would help to fully characterize the behavior of any graph under the $KB_e$ operator.

This work is organized as follows. In Section 2 the necessary notation is given. In Section 3 we give connectivity results of $KB_e(G)$.
In Section 4 and Section 5 we present some results about convergent and divergent graphs, respectively. In Section 6, we study \textit{burgeon graphs}.
Finally, in Section 7 we state some general conjectures on the subject.

\section{Preliminaries}

Along the paper we restrict to undirected simple graphs. Let $G=(V,E)$ be a graph with vertex set $V(G)$ and edge set $E(G)$, and 
let $n=|V(G)|$ and $m=|E(G)|$. A \textit{subgraph} $G'$ of $G$ is a graph $G'=(V',E')$, where $V'\subseteq V$ and 
$E'\subseteq E$ such that all endpoints of the edges of $E'$ are in $V'$. When $E'$ has all the edges of $E$ whose endpoints belong to the vertex subset $V'$, we say that $V'$ \textit{induces} the subgraph $G'=(V',E')$, that is, $G'$ is an \textit{induced subgraph} of $G$.
Also, let $G[V']$ denote the induced subgraph of $G$ by the set $V'$.
A graph $G=(V,E)$ is \textit{bipartite} when there exist sets $U$ and $W$ such that $V= U \cup W$, $U \cap W = \emptyset$, $U \neq \emptyset$, $W \neq \emptyset$ and $E \subseteq U \times W$. Say that $G$ is a 
\textit{complete graph} when every possible edge belongs to $E$. A complete graph on $n$ vertices is denoted $K_{n}$. A bipartite graph is \textit{complete bipartite} when
every vertex of the first set is connected to every vertex of the second set. A complete bipartite graph on $p$ vertices in one set and $q$ vertices in the other is denoted $K_{p,q}$.
A \textit{clique} of $G$ is a maximal complete induced subgraph, while a \textit{biclique} is a maximal induced complete bipartite subgraph of $G$. 
The \textit{open neighborhood} of a vertex $v \in V(G)$, denoted $N(v)$, is the set of vertices adjacent to $v$. 
The \textit{closed neighborhood} of a vertex $v \in V(G)$, denoted $N[v]$, is the set $N(v) \cup \{v\}$.
Given a vertex $v \in V(G)$ and set of vertices $S \subseteq V(G)$, we denote by $N_S(v)$, to the neigborhood of the vertex $v$ restricted to the set $S$.
Given a set of vertices $S \subseteq V(G)$, $\overline{S}$ denotes the set $V(G) - S$.
The \textit{degree} of a vertex $v$, denoted by $d(v)$, is defined as $d(v) = |N(v)|$. A \textit{path} (\textit{cycle}) on $k$ vertices ($k\geq 3$), denoted by $P_k$ ($C_{k}$), is a sequence of vertices 
$v_{1},v_{2},...,v_{k} \in G$ such that $v_{i} \neq v_{j}$ for all $1 \leq i \neq j \leq k$ and $v_{i}$ is adjacent to $v_{i+1}$ 
for all $1 \leq i \leq k-1$ (and $v_k$ is adjacent to $v_1$). A graph is \textit{connected} if there exists a path between each pair of vertices.
The \textit{girth} of $G$ is the length of a shortest induced cycle in the graph. Unless stated otherwise, we assume that all graphs of this paper are connected.

Given a family of sets $\mathcal{H}$, the \textit{intersection graph} of $\mathcal{H}$ is a graph that has the members of 
$\mathcal{H}$ as vertices, and there is an edge between two sets $E,F\in\mathcal{H}$ when $E$ and $F$ have non-empty intersection.

A graph $G$ is an \textit{intersection graph} if there exists a family of sets $\mathcal{H}$ such that $G$ is the intersection 
graph of $\mathcal{H}$. We remark that any graph is an intersection graph \cite{Szpilrajn-MarczewskiFM1945}. 

Let $H$ be any graph operator and let $G$ be a graph. The \textit{iterated graph under the operator} $H$ is defined iteratively as follows: 
$H^{0}(G)=G$ and for $k\geq 1$, $H^{k}(G)=H^{k-1}(H(G))$. We say that $G$ \textit{diverges} 
(resp. \textit{converges} or \textit{is periodic}) under the operator $H$ whenever $\lim_{k \rightarrow \infty}|V(H^{k}(G))|=\infty$ 
(resp. $\lim_{k \rightarrow \infty}H^{k}(G)=H^{m}(G)$ for some $m$ or $H^{k}(G)=H^{k+s}(G)$ for some $k$ and $s \geq 2$). 
The study of the \textit{behavior of a graph} $G$ under the operator $H$ consists of deciding if $G$ converges, diverges or is periodic under $H$.

We assume that the empty graph is convergent under the operator $KB_e$, as it is obtained by appyling the edge-biclique operator to
a graph that does not contain any bicliques.

\section{Connectivity}

In this section we will study the connectivity relation between $G$ and $KB_e(G)$. In comparison to the biclique graph $KB(G)$, it was shown in~\cite{tesislea,arxivdist} that
$G$ is connected if and only if $KB(G)$ is connected. This result is no longer true for edge-biclique graphs. For example, just observe that $KB_e(K_n)$ consists of $\frac{n(n-1)}{2}$ isolated vertices, i.e., it is disconnected.

The main result of this section is the following Theorem that characterizes when $KB_e(G)$ is connected.

\begin{theorem}\label{conexo}
Let $G$ be a connected graph. $KB_e(G)$ is connected if and only if there is no subset of vertices $S \subsetneq V(G)$ such that for every $v,w \in S$ we
have $N_{\overline{S}}(v)=N_{\overline{S}}(w)$, and $|E(G[S])| \geq 1$, that is, the subgraph induced by $S$ has at least one edge.
\end{theorem}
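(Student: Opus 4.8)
The condition on $S$ says precisely that $S$ is a \emph{module} of $G$ (all vertices of $S$ have the same neighbourhood outside $S$) which is proper ($S\subsetneq V(G)$) and induces at least one edge; call such an $S$ a \emph{bad module}. I would prove the two implications separately.

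\emph{If a bad module exists, $KB_e(G)$ is disconnected.} Let $S$ be a bad module. The key observation is that if a biclique $B=(X,Y)$ of $G$ contains an edge $xy$ with $x,y\in S$, then $V(B)\subseteq S$: a vertex $t\in V(B)\setminus S$ would have to be adjacent to \emph{all} of $S$ or to \emph{none} of $S$ (since $S$ is a module), hence adjacent to both or to neither of $x,y$, which is impossible because $x,y$ lie on opposite sides of the complete bipartite graph $B$. Conversely every biclique contained in $G[S]$ does contain an edge of $G[S]$, so the bicliques $B$ with $V(B)\subseteq S$ are exactly those containing an edge of $G[S]$, and by the observation such a biclique shares an edge only with another such biclique. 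Thus these bicliques form a union of connected components of $KB_e(G)$. This set is non-empty since $G[S]$ has an edge (extend it to a biclique), and it is a proper subset of the vertex set of $KB_e(G)$ since, $G$ being connected and $S\neq V(G)$, there is an $S$--$\overline S$ edge, whose biclique meets $\overline S$ and so is not among them. Hence $KB_e(G)$ has at least two components.

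\emph{If no bad module exists, $KB_e(G)$ is connected.} I argue by contradiction, assuming $KB_e(G)$ is disconnected. For a connected component $\mathcal C$ of $KB_e(G)$ put $E_{\mathcal C}=\bigcup_{B\in\mathcal C}E(B)$ and $U_{\mathcal C}=\bigcup_{B\in\mathcal C}V(B)$. Since bicliques in different components share no edge, every biclique of $G$ has its edge set entirely inside or entirely outside $E_{\mathcal C}$; consequently, for every vertex $w$, its $E_{\mathcal C}$-neighbours are completely joined to its non-$E_{\mathcal C}$-neighbours (otherwise a non-edge among them yields an induced $K_{1,2}$ whose maximal extension contains both an $E_{\mathcal C}$-edge and a non-$E_{\mathcal C}$-edge, forcing that biclique simultaneously into and out of $\mathcal C$). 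The main step is the claim that if $U_{\mathcal C}\neq V(G)$ then $U_{\mathcal C}$ is a module: a vertex $z\notin U_{\mathcal C}$ is incident to no $E_{\mathcal C}$-edge, and if $z$ is adjacent to some $u\in U_{\mathcal C}$, picking $B=(X,Y)\in\mathcal C$ with $u\in X$ and using that all $u$--$Y$ edges lie in $E_{\mathcal C}$ while $zu$ does not, the joining property forces $z\sim Y$, then (arguing from a vertex of $Y$) $z\sim X$, i.e.\ $z\sim V(B)$; propagating along the connected component $\mathcal C$ through shared edges gives $z\sim U_{\mathcal C}$. Therefore, if some component $\mathcal C$ has $U_{\mathcal C}\neq V(G)$ we obtain a bad module $U_{\mathcal C}$ — a contradiction. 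So $U_{\mathcal C}=V(G)$ for every component. Then every vertex lies in a biclique of each component, hence is incident to an edge of each component, so for every $v$ the neighbourhood $N(v)$ splits into at least two non-empty classes (by the component of the incident edge), pairwise completely joined; thus $G[N(v)]$ is complete multipartite with an edge, so $v$ lies in a triangle, and also $G$ is paw-free (an induced paw with triangle $pqr$ and pendant $s$ at $p$ would give $q\sim r$, $q\not\sim s$, $r\not\sim s$ inside $N(p)$, contradicting that $G[N(p)]$ is complete multipartite). A connected paw-free graph is triangle-free or complete multipartite; the first is excluded since every vertex lies in a triangle, so $G=K_{n_1,\dots,n_k}$, and as $G$ has a triangle, $k\ge 3$, whence $V_1\cup V_2$ is a bad module — a contradiction. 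Hence $KB_e(G)$ is connected.

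\emph{Where the work is.} The easy direction is essentially the displayed observation. In the hard direction the delicate step is the claim that a component's vertex set $U_{\mathcal C}$ is a module: this needs the ``completely joined neighbourhoods'' lemma and the propagation argument along $\mathcal C$, and getting the reduction clean — the dichotomy between ``some $U_{\mathcal C}$ is proper'' (immediate bad module) and ``all $U_{\mathcal C}=V(G)$'' (which pins down $G$ via its local structure to a complete multipartite graph with $\ge 3$ parts, again a bad module) — is the crux; I expect the second case, and the verification of the local complete-multipartite structure, to require the most care.
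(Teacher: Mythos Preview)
Your easy direction and the ``completely joined neighbourhoods'' lemma are correct, and your argument that a proper $U_{\mathcal C}$ is already a module is both correct and cleaner than the paper's corresponding step (the paper proves this only under the extra hypothesis that every edge of $G[U_{\mathcal C}]$ lies in $E_{\mathcal C}$). The gap is in the case where $U_{\mathcal C}=V(G)$ for every component. You claim that $G[N(v)]$ is complete multipartite, but you have only shown that vertices in \emph{different} classes are adjacent, not that vertices in the \emph{same} class are non-adjacent; nothing you have proved rules out $va,vb\in E_{\mathcal C}$ with $a\sim b$. With intra-class edges allowed, the paw argument collapses: in an induced paw with triangle $pqr$ and pendant $s$ at $p$, the non-edges $qs,rs$ only force $q,r,s$ into the same class at $p$, and $qr$ is then merely an intra-class edge --- no contradiction. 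The subsequent ``connected paw-free $\Rightarrow$ complete multipartite'' conclusion therefore has no footing.

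The paper avoids this case by a descent. Starting from any component $\mathcal C$, either every edge of $G[U_{\mathcal C}]$ lies in $E_{\mathcal C}$ --- then $U_{\mathcal C}$ is the desired bad module, necessarily proper since $E_{\mathcal C}\neq E(G)$ --- or some internal edge lies in another component $\mathcal C'$, and the paper proves $U_{\mathcal C'}\subsetneq U_{\mathcal C}$ (the key step, Claim~2, shows that a suitable common neighbour $x$ of the endpoints of that edge is adjacent to all of $U_{\mathcal C'}$ via $E_{\mathcal C}$-edges only, forcing $x\notin U_{\mathcal C'}$). Iterating reaches a proper $U_{\mathcal C''}$, which is then a bad module. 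In particular this shows your second case is vacuous, but establishing the strict containment $U_{\mathcal C'}\subsetneq U_{\mathcal C}$ is genuine work, and it is precisely this piece that your proof is missing.
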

\begin{proof}
$\Rightarrow)$ Suppose that there exists a subset of vertices $S \subsetneq V(G)$ verifying Theorem's hypothesis. Now, as for every pair of vertices $v,w \in S$, 
$N_{\overline{S}}(v)=N_{\overline{S}}(w)$, and $|E(G[S])| \geq 1$, we have that every edge in $E(G[S])$ will not be part of any biclique containing an edge outside $E(G[S])$. 
This implies that $KB_e(G)$ is disconnected and proves the ``only if'' part of the Theorem.

$\Leftarrow)$ Suppose that $KB_e(G)$ is not connected. We will show how to find a set $S$ of vertices verifying the hypothesis of the Theorem.
Let $b$ be a vertex in $KB_e(G)$ and let $B$ be its corresponding biclique in $G$.
Let $S_{E(B)} = \{ e \in E(G) : e$ belongs to a biclique $B'$ such that its corresponding vertex $b' \in KB_e(G)$ is in the same connected component as $b\}$.
Clearly $S_{E(B)} \neq E(G)$ and $S_{E(B)} \neq \emptyset$ as $E(B) \subseteq S_{E(B)}$. Let $S_{V(B)} = \{ v \in V(G) : \exists e \in S_{E(B)}$ such that $e$ is incident to $v\}$.
Clearly $S_{V(B)} \neq \emptyset$ as $V(B) \subseteq S_{V(B)}$.
We have the following two cases now.

\begin{itemize}
\item \textbf{Case A)} $E(G[S_{V(B)}]) \subseteq S_{E(B)}$.\\
We will show that $S_{V(B)}$ is the desired set.
Observe that $G[S_{V(B)}]$ has at least one edge and it is clearly connected. Now let $uw$ be an edge such that $u \in S_{V(B)}$ and $w \in \overline{S_{V(B)}}$.
Let $u' \in S_{V(B)}$ be a vertex different to $u$. If $u'$ is adjacent to $w$ there is nothing to show. Suppose then that $u'$ is not adjacent to $w$.
Now, since $G[S_{V(B)}]$ is connected, there is an induced path $u'=u_1u_2\ldots u_k=u$ between $u'$ and $u$. Let $u_i$, $i \in \{2,\ldots,k\}$, be the vertex of minimum index of the path
that is adjacent to $w$. Clearly $u_i$ exists as $u_k=u$ is adjacent to $w$. Since $u_{i-1}$ is not adjacent to $w$, the set 
$\{u_i,u_{i-1},w\}$ is contained 
in a biclique that has the edge $u_{i-1}u_i$. As the edge $u_{i-1}u_i \in S_{E(B)}$, we have that
the edge $u_iw \in S_{E(B)}$ as well, thus
$w \in S_{V(B)}$ which is a contradiction. We conclude that $u'$ should be adjacent to $w$, obtaining that for every pair of vertices $u,u' \in S_{V(B)}$, 
we have that $N_{\overline{S_{V(B)}}}(u)=N_{\overline{S_{V(B)}}}(u')$ as desired.

\item \textbf{Case B)} $\exists e \in E(G[S_{V(B)}]) - S_{E(B)}$.\\
There exists a biclique $B_e$ such that $e \in E(B_e)$ and $B_e$ does not have any edge in common with $S_{E(B)}$.
Consider now the sets $S_{E(B_e)}$ and $S_{V(B_e)}$ defined likewise $S_{E(B)}$ and $S_{V(B)}$. It is clear that $S_{E(B_e)} \neq \emptyset$, $S_{E(B)} \cap S_{E(B_e)} = \emptyset$ and  
$S_{V(B_e)} \neq \emptyset$. Moreover, $S_{V(B_e)} \subseteq S_{V(B)}$. For this, observe that the endpoints of the edge $e$, say $v,w \in B_e$, belong
to $S_{V(B)}$, as $e \in E(G[S_{V(B)}])$. Furthermore, they should have a common neighbor, say $z \in S_{V(B)}$, with $vz, wz \in S_{E(B)}$. 
If $B_e$ is just the edge $e$, we obtain directly that $S_{V(B_e)} \subseteq S_{V(B)}$. If $B_e$ is a larger biclique, there exists a vertex $u \in B_e$, without loss of generality, adjacent to $v$ and not adjacent to $w$.
Clearly, $u$ belongs to $S_{V(B_e)}$. Now if $u$ is not adjacent to $z$, then there is a biclique containing
$\{u,v,z\}$ and since $vz \in S_{E(B)}$, then $uv \in S_{E(B)}$ and therefore, $e=vw \in S_{E(B)}$ which is a contradiction.
Finally, $u$ is adjacent to $z$, thus $\{u,z,w\}$ is contained in a biclique containing the edge $zw \in S_{E(B)}$, therefore
$uz \in S_{E(B)}$ implying that $u \in S_{V(B)}$. Since $G[S_{V(B_e)}]$ is connected, similar arguments can be applied 
for every other vertex $x \in S_{V(B_e)}$. As there is an induced path from $x$ to vertices $u,w$, we will obtain that $xz \in S_{E(B)}$ and 
thus $x \in S_{V(B)}$, otherwise we would get a contradiction the same way as before.

Now, if $|S_{V(B_e)}| = 2$ then it is easy to see that $S_{V(B_e)}$ is the desired set. In what follows we assume that 
$|S_{V(B_e)}| \geq 3$.

We will show that $S_{V(B_e)} \subsetneq S_{V(B)}$ therefore, if $S_{V(B_e)}$ does not verify \textbf{Case A)}, then we obtain another set $S_{V(B_{e'})}$ such that
$S_{V(B_{e'})} \subsetneq S_{V(B_e)}$ and we repeat the process. Since the graph is finite, in some point we will obtain a set of vertices verifying \textbf{Case A)} which will
conclude the proof (see Fig.~\ref{congraph}). We will use the following three claims.
\begin{claim}
$\forall e=vw \in S_{E(B_e)}$ that belongs to two different bicliques, then 
\begin{itemize}
\item $\exists v_1,w_1 \in S_{V(B_e)}$ such that the pairs of vertices $v,v_1$ and $w,w_1$ are adjacent, and $v,w_1$, $w,v_1$ and $v_1,w_1$ are not adjacent; or
\item $\exists v_1,v_2 \in S_{V(B_e)}$ such that the pairs of vertices $v,v_1$, $v,v_2$ and $v_1,v_2$ are adjacent, and $w,v_1$ and $w,v_2$ are not adjacent.\footnote{Note that this Claim is valid for any edge in a graph that belongs to two bicliques.}
\end{itemize}
These two options are shown in Figure~\ref{casosclaim}.
\end{claim}
\begin{figure}[ht!]	
  \centering	
  \includegraphics[scale=.5]{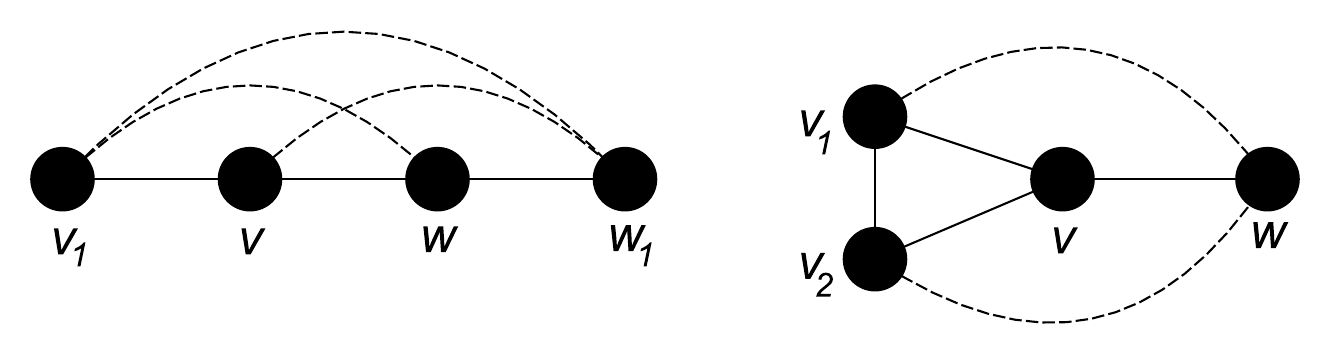} 	
  \caption{Unique two options for an edge $vw$ belonging to two different bicliques.}
  \label{casosclaim}	
\end{figure}
\vspace*{-3mm}
\textit{Proof of Claim $1$}. First observe that since $|S_{V(B_e)}| \geq 3$, there exists a vertex, say $v_1 \in S_{V(B_e)}$, adjacent to $v$ and not adjacent to $w$, i.e., the biclique
containing the edge $vw$ is bigger than a $K_{1,1}$. This implies that $vv_1 \in S_{E(B_e)}$.  Now, since $vw$ belongs to another biclique than the one containing $\{v,w,v_1\}$, then, one case 
would be to have a vertex, say $w_1 \in S_{V(B_e)}$, adjacent to $w$ and not adjacent to $v$. Moreover, $w_1$ is not adjacent to $v_1$, as otherwise, $\{v,w,v_1,w_1\}$ would be in the same biclique.
Clearly, the edge $ww_1 \in S_{E(B_e)}$, as both bicliques intersect in the edge $vw \in S_{E(B_e)}$. This shows the first option of the Claim. Now,
if such a vertex $w_1$ does not exist, then it should exist a vertex $v_2 \in S_{V(B_e)}$ such that $v_2$ is adjacent to $v$ and not adjacent to $w$. Moreover, since the biclique
containing $\{v,w,v_2\}$ should be different to the one having $\{v,w,v_1\}$, this vertex $v_2$ is adjacent to $v_1$. As before, since these two bicliques have $vw \in S_{E(B_e)}$ in common,
then $vv_2 \in S_{E(B_e)}$ as well. Note that the edge $v_1v_2$ might or might not belong to $S_{E(B_e)}$.
\qqed

\begin{claim}
Let $e=u_1u_2 \in E(G[S_{V(B)}]) - S_{E(B)}$ and let $x \in S_{V(B)}$ such that $u_1,u_2 \in N(x)$. Then $\forall v \in S_{V(B_e)}$, $v$ is adjacent to $x$ and the
edge $vx \in S_{E(B)}$. 
\end{claim}
\vspace*{-3mm}
\textit{Proof of Claim $2$}. First note that $u_1x$ and $u_2x$ belong to $S_{E(B)}$ because $u_1u_2 \in E(G[S_{V(B)}])$. Now, since $|S_{V(B_e)}| \geq 3$, the biclique $B_e$ containing 
$u_1u_2$ is bigger than a $K_{1,1}$. Let $u_3 \in V(B_e) \subseteq S_{V(B_e)}$ be a vertex different from $u_1$ and $u_2$ such that (without loss of generality) $u_3,u_1$ are not adjacent and
$u_3,u_2$ are adjacent. Clearly, the edge $u_3u_2 \in S_{E(B_e)}$. If $u_3$ and $x$ are not adjacent, then 
$u_2x \in S_{E(B)} \cap S_{E(B_e)}$ (as $\{u_2,x,u_3\}$ is contained in a biclique that intersects $B_e$),
which is a contradiction. Therefore, $u_3,x$ are adjacent. Now, since the set $\{x,u_3,u_1\}$ is contained in a biclique that has the edge $u_1x \in S_{E(B)}$, 
it follows that $u_3x \in S_{E(B)}$ as well. This same argument applies for every vertex in $B_e$, therefore if $V(B_e) = S_{V(B_e)}$, then the proof is complete. Otherwise, there exists another
biclique $B'$ in $S_{V(B_e)}$ having an edge in common with $B_e$. Suppose without loss of generality that the edge $u_1u_2$ belongs to both. Now, by Claim 1, there are two options for 
this situation. Observe that in both options, there exists a vertex, say $u_4 \in S_{V(B_e)}$, that is adjacent to $u_1$ and not to $u_2$, or adjacent to $u_2$ and not $u_1$. That is, there is 
an induced $P_3$ containing $u_4$ in one of the extremes. Suppose the first case, i.e., $u_4$ is adjacent to $u_1$ and not to $u_2$ (the other option is similar).
As before, $u_4$ must be adjacent to $x$, otherwise $u_1x \in S_{E(B)} \cap S_{E(B_e)}$, a contradiction. Since $\{u_4,x,u_2\}$ is contained in a biclique that has the edge 
$u_2x \in S_{E(B)}$, then we have that $u_4x \in S_{E(B)}$. Observe that this argument can be used for every vertex in $B'$. Finally, we apply the same reasoning we used
for $B'$, to the other bicliques having edges in $S_{E(B_e)}$ that intersect previous analyzed bicliques. This completes the proof.
\qqed

\begin{claim}
There exists a vertex $x \in S_{V(B)}$ such that $x \notin S_{V(B_e)}$.
\end{claim}
\vspace*{-3mm}
\textit{Proof of Claim $3$}. By Claim 2, we have a vertex $x \in S_{V(B)}$ such that $\forall v \in S_{V(B_e)}$, $v$ is adjacent to $x$ and the edge $vx \in S_{E(B)}$. 
Finally, since $S_{E(B)} \cap S_{E(B_e)} = \emptyset$, and by definition of the sets $S_{E(B_e)}$ and $S_{V(B_e)}$, we have that $x \notin S_{V(B_e)}$.
\qqed

To conclude the proof of \textbf{Case B)}, by Claim 3, there is a vertex $x \in S_{V(B)}$ such that $x \notin S_{V(B_e)}$, thus $S_{V(B_e)} \subsetneq S_{V(B)}$ as we wanted to show.
Therefore, we can always obtain a set of vertices veryfing \textbf{Case A)} as desired.
\end{itemize}
As there are no cases left to analyze, the proof is complete.
\end{proof}

\begin{figure}[ht!]	
  \centering	
  \includegraphics[scale=.3]{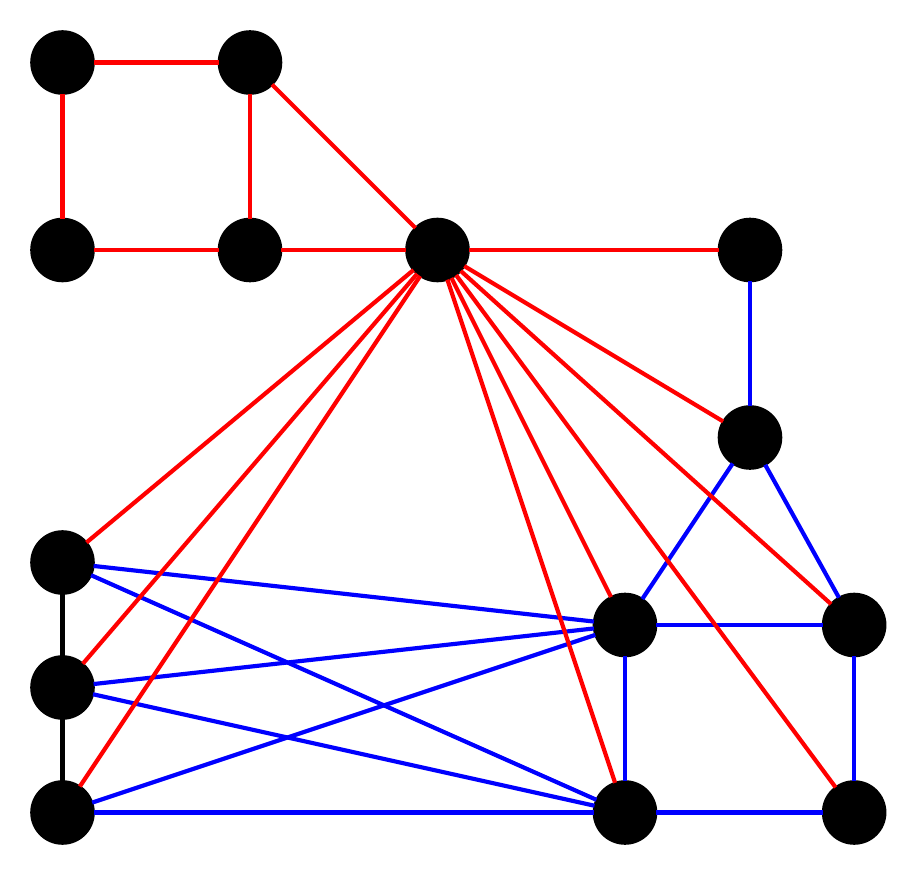} 	
  \caption{In this example we can see three set of edges, $S_{E(red)}$, $S_{E(blue)}$ and $S_{E(black)}$. Note that $S_{V(black)} \subsetneq S_{V(blue)} \subsetneq S_{V(red)} = V(G)$, then
  following \textbf{Case B)} of the proof of Theorem~\ref{conexo}, the set $S_{V(black)}$ is the desired one.}
  \label{congraph}	
\end{figure}
\FloatBarrier

To finish the section, we present an $O(n \times m)$ algorithm that, given a graph $G$, decides if $KB_e(G)$ is connected or not. Moreover, if $KB_e(G)$ is disconnected, the algorithm gives 
a partition of the edges of $G$ such that each set of the partition has the edges belonging to bicliques that are in the same connected component in $KB_e(G)$. This algorithm
relies mostly in Claim 1 of Theorem~\ref{conexo}, since otherwise, verifying the condition for all subsets of vertices $S \subsetneq V(G)$ would take exponential time.
We also remark that, since the number of bicliques of a graph can be exponential~\cite{PrisnerC2000}, constructing $KB_e(G)$ to check later if it is connected can take exponential time as well.

\begin{algorithm}[H]
\SetAlgoLined
\SetKwInOut{Input}{Input}
\SetKwInOut{Output}{Output}
\Input{A connected graph $G$.}
\Output{A partition of $E(G)=E_1 \cup \cdots \cup E_k$ such that each $E_i$, for $i=1,\ldots,k$, has the edges belonging to bicliques that are in the same connected component in $KB_e(G)$.}
\BlankLine

 mark all edges as not used; $k \leftarrow 0$; $S_E \leftarrow \emptyset$\;
 \While{there exist unused edges}{
	$k \leftarrow k+1$\; 
	take an unused edge $e$; $S_E \leftarrow S_E \cup \{e\}$; mark $e$ as used\;
  	\While{$S_E \neq \emptyset$}{
  		remove an edge $e=vw \in S_E$\;
  		$E_k \leftarrow E_k \cup \{e\}$\;
  		\For{every vertex $z \in N(v)-N(w)$ and $zv$ not used}{
 			$S_E \leftarrow S_E \cup \{zv\}$; mark $zv$ as used\;
 		}
  		\For{every vertex $z \in N(w)-N(v)$ and $zw$ not used}{
 			$S_E \leftarrow S_E \cup \{zw\}$; mark $zw$ as used\;
 		}
  	}
 }
 \eIf{$k=1$}{
 	\Return{$KB_e(G)$ is connected}\;
 }{
 	\Return{$KB_e(G)$ is disconnected and $E(G)=E_1 \cup \cdots \cup E_k$}\; 
 }
 \caption{Connectivity of $KB_e(G)$}
 \label{algo1}
\end{algorithm}
\vspace*{2mm}
It is clear that Algorithm~\ref{algo1} runs in $O(n \times m)$ since each edge is added once to $S_E$ and each time we check all its endpoint's neighbors.
It only remains to show that the algorithm is correct.

\begin{proposition}
Algorithm~\ref{algo1} correctly finds a partition of $E(G)=E_1 \cup \cdots \cup E_k$ such that each $E_i$, for $i=1,\ldots,k$, has the edges belonging to bicliques that are in the same 
connected component in $KB_e(G)$. In particular, $KB_e(G)$ is connected if and only if $k=1$, that is, $E_1 = E(G)$.
\end{proposition}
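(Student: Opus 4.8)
The plan is to show that the sets $E_1,\dots,E_k$ returned by Algorithm~\ref{algo1} are exactly the edge sets $\bigcup_{B\in C}E(B)$, one for each connected component $C$ of $KB_e(G)$. I would route this through an auxiliary graph $\mathcal{R}$ on vertex set $E(G)$ in which two distinct edges $f,g$ of $G$ are adjacent precisely when $f\cup g$ induces a $P_3$ in $G$ (equivalently, when $f$ and $g$ share an endpoint and the other two endpoints are non-adjacent). First I would check that the algorithm computes the connected components of $\mathcal{R}$: when an edge $vw$ is dequeued from $S_E$, the vertices $z\in N(v)-N(w)$ — other than $w$ itself, whose only effect is to re-propose $vw$, which is harmless since $vw$ is already marked used — are exactly those for which $\{z,v,w\}$ induces a $P_3$ centred at $v$, i.e. those for which $zv$ is an $\mathcal{R}$-neighbour of $vw$; symmetrically on the $w$-side. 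Since every edge is marked used the first time it enters $S_E$ and is dequeued into $E_k$ exactly once, each pass of the outer loop performs a graph search in $\mathcal{R}$ from the chosen seed edge and so collects in $E_k$ precisely one connected component of $\mathcal{R}$; the outer loop then repeats over the edges not yet reached. Hence the $E_i$ are the vertex classes of the components of $\mathcal{R}$ and they partition $E(G)$ (if $E(G)=\emptyset$ there is nothing to prove, so assume otherwise).

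Next I would relate components of $\mathcal{R}$ to those of $KB_e(G)$. Every edge $f$ of $G$ lies in at least one biclique: view $f$ as a $K_{1,1}$ and extend it greedily, always keeping the two endpoints of $f$ on opposite sides, to a maximal induced complete bipartite subgraph. Moreover, any two bicliques that contain $f$ are adjacent in $KB_e(G)$, so all bicliques through $f$ lie in one component, which I call $C(f)$. If $f\,\mathcal{R}\,g$, then the induced $P_3$ on $f\cup g$ extends to a biclique $B$ that contains both $f$ and $g$, whence $C(f)=C(g)$ is the component of $B$; by transitivity, $f$ and $g$ in the same component of $\mathcal{R}$ implies $C(f)=C(g)$, i.e. each component of $\mathcal{R}$ is contained in a single ``$C(\cdot)$-class''.

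The crux is the converse inclusion, for which I would prove that all edges of a single biclique $B$ lie in one component of $\mathcal{R}$. Writing $B=X\cup Y$ with sides $X,Y$, the key point is that $B$ is an \emph{induced} subgraph, so $xx'\notin E(G)$ for distinct $x,x'\in X$ and $yy'\notin E(G)$ for distinct $y,y'\in Y$; hence for distinct $x,x'\in X$ and any $y\in Y$ the triple $\{x,x',y\}$ induces a genuine $P_3$ of $G$, giving $xy\,\mathcal{R}\,x'y$, and symmetrically $xy\,\mathcal{R}\,xy'$ for distinct $y,y'\in Y$. Iterating these adjacencies links all edges $xy$ of $B$ inside $\mathcal{R}$ (the cases $|X|=|Y|=1$ being vacuous). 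Consequently, if $C(f)=C(g)$, pick a path $B_0,\dots,B_t$ in $KB_e(G)$ with $f\in E(B_0)$, $g\in E(B_t)$ and $E(B_{i-1})\cap E(B_i)\neq\emptyset$ for $1\le i\le t$; by the previous claim all edges of each $B_i$ are $\mathcal{R}$-equivalent, and the shared edges glue consecutive bicliques, so $f$ and $g$ are $\mathcal{R}$-equivalent. Combining with the previous paragraph, the components of $\mathcal{R}$ are exactly the sets $\{f\in E(G):C(f)=C\}=\bigcup_{B\in C}E(B)$ for $C$ a component of $KB_e(G)$; together with the first paragraph this yields $E_i=\bigcup_{B\in C_i}E(B)$ for the components $C_1,\dots,C_k$ of $KB_e(G)$, which is the claim, and in particular $k=1$ if and only if $KB_e(G)$ is connected.

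The main obstacle is this last claim — that the edges of a single biclique form one $\mathcal{R}$-class — where it is essential to use that bicliques are \emph{induced}, so that the independence within each side of the bipartition gives true non-edges of $G$ and hence true induced $P_3$'s; this is the structural content that plays, in my argument, the role of Claim~1 in the proof of Theorem~\ref{conexo}. Everything else (every edge lies in a biclique, all bicliques through a fixed edge are mutually adjacent in $KB_e(G)$, transitivity along a path of bicliques, and the faithfulness of the inner loop's search to $\mathcal{R}$) is routine.
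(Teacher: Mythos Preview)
Your argument is correct and cleaner than the paper's. The paper's proof is largely informal: it asserts that the inner loop ``uses Claim~1 of Theorem~\ref{conexo}'' to pick up all edges of bicliques in the same component, and for the final equivalence ($k=1 \Leftrightarrow KB_e(G)$ connected) it appeals to Theorem~\ref{conexo} itself, producing from a nontrivial $E_i$ a vertex set $S_{V_i}$ that witnesses disconnection. Your route is genuinely different: you introduce the auxiliary ``$P_3$-graph'' $\mathcal{R}$ on $E(G)$, identify the inner loop as a graph search in $\mathcal{R}$, and then prove directly that the components of $\mathcal{R}$ coincide with the edge sets $\bigcup_{B\in C}E(B)$ over components $C$ of $KB_e(G)$, via the two inclusions (a common induced $P_3$ extends to a biclique; the edges of a single biclique are pairwise $\mathcal{R}$-linked using that bicliques are \emph{induced}). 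This makes the proposition independent of Theorem~\ref{conexo} and of Claim~1, and in fact gives an alternative, purely edge-level explanation of why the algorithm is correct. The paper's approach has the advantage of tying the algorithm back to the structural characterisation already established, but your approach is more self-contained and arguably more transparent about what the algorithm is actually computing.
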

\begin{proof}
Observe that in the first while loop, the algorithm takes an unused edge (while exists) and adds it to a set $S_E$ of edges to analyze. 
The second while loop will add all edges that belong to the partition of that edge. For this, it takes an edge $e=vw$ from $S_E$ (while $S_E \neq \emptyset$) and
adds it to the current edge partition. Now, if $e$ is not a biclique itself (in which case $N(v)=N(w)$, thus $e$ is alone in its partition), then it must
exist other vertex $z$ verifying $z \in N(v)-N(w)$ or $z \in N(w)-N(v)$, therefore it adds all edges of the form $zv$ or $zw$ to $S_E$ and to the current partition, respectively.
Now, for each other iteration of the second while loop, the algorithm uses Claim 1 of Theorem~\ref{conexo} to see if an already used edge belongs to another biclique, and adds
these new edges corresponding to those bicliques. When the while loop ends for an iteration $i$, that is, $S_E = \emptyset$, then $E_i$ has all edges that belong to bicliques
that are in the same connected component of $KB_e(G)$ as the biclique containing the initial edge of that iteration.

Finally, if $E_1 = E(G)$, then $KB_e(G)$ is connected since all edges of the graph belong to bicliques to one same connected component in $KB_e(G)$. Otherwise,
one of the sets $S_{V_i}$ formed with incident vertices to the edges in $E_i$ (analog definition as $S_{E(B)}$ and $S_{V(B)}$ in Theorem~\ref{conexo}) verifies
that $S_{V_i} \subsetneq V(G)$ and therefore Theorem~\ref{conexo} holds, that is, $KB_e(G)$ is disconnected.
\end{proof}

\section{Convergence}

To start this section we have this first easy result.

\begin{lemma}
For $n\geq 2$, the complete graph $K_n$ converges to the empty graph under the operator $KB_e$ in two steps.
\end{lemma}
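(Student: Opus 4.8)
The plan is to compute $KB_e(K_n)$ and then $KB_e^2(K_n)$ directly from the structure of bicliques of a complete graph. First I would identify the bicliques of $K_n$: since $K_n$ has no induced complete bipartite subgraph on more than two vertices (any three vertices of $K_n$ induce a triangle, which is not bipartite), the maximal induced complete bipartite subgraphs are exactly the single edges $K_{1,1}$. Hence the bicliques of $K_n$ are precisely the $\binom{n}{2}$ edges of $K_n$.

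Next I would determine the edge-intersection relation among these bicliques. Two bicliques of $K_n$ are single edges, and they share an edge if and only if they are the \emph{same} edge (two distinct edges of a graph cannot share an edge). Therefore no two distinct vertices of $KB_e(K_n)$ are adjacent, i.e.\ $KB_e(K_n)$ is the graph on $\binom{n}{2}$ vertices with no edges. This is the first step; it uses $n\geq 2$ so that $K_n$ has at least one edge and $KB_e(K_n)$ is nonempty.

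Finally I would apply the operator once more. A graph with no edges has no induced complete bipartite subgraph at all (a complete bipartite graph $K_{p,q}$ has $p,q\geq 1$, hence at least one edge), so $KB_e(K_n)$ has no bicliques, and by the convention stated in Section~2 (``the empty graph is convergent under $KB_e$, as it is obtained by applying the edge-biclique operator to a graph that does not contain any bicliques''), $KB_e^2(K_n)$ is the empty graph. Thus $KB_e^2(K_n)$ is empty while $KB_e(K_n)$ is not (for $n\geq 2$), so the convergence to the empty graph happens in exactly two steps, which is what the lemma claims.

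There is essentially no hard part here; the only point requiring a little care is making sure the two reduction steps are genuinely \emph{two} steps and not one — that is, that $KB_e(K_n)$ is not already empty — which is exactly why the hypothesis $n\geq 2$ appears. I would state this explicitly to justify the ``in two steps'' part of the claim.
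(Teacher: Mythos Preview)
Your proof is correct and follows essentially the same approach as the paper's: identify that the bicliques of $K_n$ are precisely its $\binom{n}{2}$ edges, observe that no two distinct edges edge-intersect so $KB_e(K_n)$ is an edgeless graph on $\binom{n}{2}$ vertices, and conclude that $KB_e^2(K_n)$ is empty. Your additional remark that $n\geq 2$ is needed to ensure $KB_e(K_n)$ is nonempty (so that the convergence genuinely takes two steps rather than one) is a nice point of care that the paper leaves implicit.
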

\begin{proof}
Clearly each edge of $K_n$ is a biclique that does not edge-intersect with another one. Then $KB_e(K_n)$ consists of $\frac{n(n-1)}{2}$ isolated vertices (and no bicliques), 
therefore $KB^2_e(K_n)$ is the empty graph.
\end{proof}

Next we show that graphs without induced cycles of length $3$ and $4$ are convergent.

\begin{theorem}\label{conv1}
If $G$ has girth at least five, then the edge-biclique operator applied to $G$ converges towards the graph induced by the union of all the cycles and paths connecting cycles of $G$. 
\end{theorem}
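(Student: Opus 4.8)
The plan is to show that under the girth hypothesis the edge-biclique operator does nothing more than delete the vertices of degree at most one, and then to iterate this. First I would determine the bicliques of a graph $G$ of girth at least five. Since $G$ has no induced $C_4$ it contains no $K_{2,2}$, so every induced complete bipartite subgraph is a star $K_{1,q}$; and since $G$ has no triangle, $N(v)$ is independent for every $v$, so $G[N[v]]$ is exactly the star $K_{1,d(v)}$ centered at $v$. The main claim of this step is that, for $d(v)\geq 2$, the star $G[N[v]]$ is a biclique, and that these stars (together with the connected components isomorphic to $K_2$) are the only bicliques of $G$. Maximality of $G[N[v]]$ is the delicate point: a vertex added to the leaf-side of such a star must already lie in $N(v)$, while a second vertex $u$ added to the center-side would be, together with $v$, adjacent to two nonadjacent common neighbours, creating an induced $C_4$. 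An induced edge $uv$ fails to be maximal precisely when $d(u)\geq 2$ or $d(v)\geq 2$ (then it sits inside a larger such star), so $K_2$ is a biclique only as a component. Finally $v\mapsto G[N[v]]$ is injective on $\{v:d(v)\geq 2\}$, since $N[v]=N[w]$ with $v\neq w$ would force a triangle.

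Writing $D(G)=\{v\in V(G):d_G(v)\geq 2\}$, it follows that for a connected $G$ on at least three vertices the map $v\mapsto B_v:=G[N[v]]$ is a bijection from $D(G)$ onto the set of bicliques of $G$. Every edge of $B_v$ is incident to $v$, so $B_v$ and $B_w$ share an edge if and only if that common edge is $vw$, i.e.\ if and only if $vw\in E(G)$; hence $KB_e(G)\cong G[D(G)]$. Two easy observations keep the induction going: $G[D(G)]$ is connected, because a shortest path in $G$ between two vertices of $D(G)$ has all its internal vertices of degree at least two and therefore lies inside $D(G)$; and $G[D(G)]$, being an induced subgraph of $G$, again has girth at least five. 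The cases $G=K_1$ and $G=K_2$ are handled directly, with $K_2\mapsto K_1\mapsto\emptyset$, and the empty graph is $KB_e$-convergent by the convention stated in Section~2.

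Iterating, $KB_e^{k}(G)$ is the graph obtained from $G$ by $k$ successive rounds of deleting all vertices of degree at most one. This peeling stabilizes after finitely many steps at the $2$-core $G'$ of $G$, and a standard argument (an edge of $G'$ that lies on no cycle of $G'$ is a bridge whose removal leaves two connected subgraphs each of which still contains a cycle) identifies $G'$ with the subgraph induced by the union of all cycles of $G$ together with all paths connecting two cycles. Since $G'$ has minimum degree at least two, or is empty, we have $D(G')=V(G')$ and hence $KB_e(G')=G'$; therefore $KB_e^{m}(G)=G'$ for all sufficiently large $m$, which is exactly the assertion that $KB_e^{k}(G)$ converges to $G'$. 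When $G$ is a tree this limit is the empty graph, as it should be.

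The only genuinely delicate step is the first one: proving that, in the absence of induced triangles and $C_4$'s, the bicliques of $G$ are exactly the closed neighbourhoods of the vertices of degree at least two, plus the $K_2$-components. Everything afterwards is bookkeeping, together with the two standard facts that leaf-peeling preserves connectivity and that its stable value is the union of the cycles of $G$ and the paths connecting them.
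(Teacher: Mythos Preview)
Your proof is correct and follows essentially the same approach as the paper: identify the bicliques of a girth-$\geq 5$ graph as the closed-neighbourhood stars $N[v]$ for $d(v)\geq 2$, conclude that $KB_e(G)\cong G[\{v:d(v)\geq 2\}]$, and iterate the leaf-deletion until it stabilizes. You simply supply details the paper leaves implicit---the maximality argument for the stars, injectivity of $v\mapsto N[v]$, preservation of connectivity and girth under one round of peeling, and the identification of the stable subgraph with the $2$-core (cycles plus paths between cycles).
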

\begin{proof}
If $G$ has girth at least five, then every biclique is a star. Moreover $G$ has no triangles, so $N(v)$ is a stable set and thus, for each $v$ of degree more than one, $N[v]$ is a maximal biclique. 
Notice also that if $u$ is adjacent to $v$, $N[u]$ and $N[v]$ contain a common edge, therefore the vertices in $KB_e(G)$ corresponding to the bicliques $N[u]$ and $N[v]$ will be adjacent. 
We can conclude that $KB_e(G)$ is exactly the graph induced by all vertices of degree at least two of $G$. For $k$ big enough, the only vertices left in $KB_e^k(G)$ are those which belong to cycles
or to paths connecting cycles, that is, $G$ converges under the operator $KB_e$ towards the graph induced by the cycles and paths connecting cycles of $G$.
\end{proof}

As an immediate result of Theorem~\ref{conv1}, we obtain the following corollary.

\begin{corollary}\label{conv2}
If $G$ has girth at least five and has no vertices of degree one, then $KB_e(G) = G$.
\end{corollary}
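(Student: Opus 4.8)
The plan is to extract from the proof of Theorem~\ref{conv1} the sharper, per-step statement that it actually establishes, and then observe that the extra hypothesis removes exactly the vertices that Theorem~\ref{conv1} discards. Recall that in the proof of Theorem~\ref{conv1} it is shown that, when $G$ has girth at least five, every vertex $v$ with $d(v) \geq 2$ gives a maximal biclique $N[v]$ (a star, because $N(v)$ is stable and the absence of $C_4$'s prevents $N[v]$ from being enlarged), that these are all the bicliques of $G$, and that the bicliques $N[u]$ and $N[v]$ edge-intersect precisely when $uv \in E(G)$. Hence $KB_e(G) = G[\{v \in V(G) : d(v) \geq 2\}]$.

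The first step, then, is simply to invoke this identity. The second step is to observe that under the hypotheses of the corollary every vertex of $G$ has degree at least two: since $G$ is connected and (being nontrivial) has no isolated vertex, the assumption that $G$ has no vertex of degree one forces the minimum degree to be at least two. The third step combines the two: $\{v \in V(G) : d(v) \geq 2\} = V(G)$, so $KB_e(G) = G[V(G)] = G$, which is the claim.

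There is essentially no obstacle here; the only points worth being careful about are (i) quoting the per-application equality $KB_e(G) = G[\{v \in V(G) : d(v) \geq 2\}]$ rather than merely the limiting behaviour stated in Theorem~\ref{conv1}, and (ii) making sure the correspondence $v \mapsto N[v]$ really is an isomorphism $G[\{v \in V(G) : d(v) \geq 2\}] \to KB_e(G)$: it is injective because a repeated closed neighbourhood would create a triangle, contradicting girth at least five, and it preserves and reflects adjacency by the edge-intersection computation recalled above. The degenerate case $G = K_1$, which has no edges and hence empty $KB_e(G)$, is excluded by reading ``girth at least five'' as requiring $G$ to contain a cycle.
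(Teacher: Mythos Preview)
Your proposal is correct and follows exactly the route the paper intends: the paper gives no separate proof, stating the corollary as ``an immediate result of Theorem~\ref{conv1}'', whose proof already contains the per-step identity $KB_e(G)=G[\{v:d(v)\ge 2\}]$ that you quote. Your extra care about injectivity of $v\mapsto N[v]$ and about the degenerate $K_1$ case goes beyond what the paper spells out, but is entirely appropriate.
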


One natural question that arises from Corollary~\ref{conv2} is: Given a graph $G$ such that $KB_e(G)=G$, does $G$ have girth at least five and no vertices of degree one?
The answer is no, for instance, the graph $\overline{C_7}$ shown in Figure~\ref{grafconv} satisfies that $KB_e(G) = G$ but its girth is three\footnote{Found using the computer.}.

\begin{figure}[ht!]	
  \centering	
  \includegraphics[scale=.3]{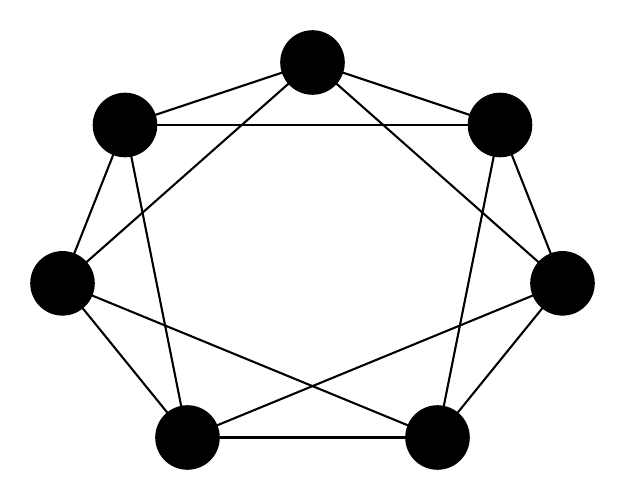} 	
  \caption{The graph $\overline{C_7}$ is the smallest graph satisfying $KB_e(G) = G$ with girth less than five.}
  \label{grafconv}	
\end{figure}

From Theorem~\ref{conv1}, we also obtain the following results.

\begin{corollary}
For every $k\geq 1$, there is a graph that converges in $k$ steps under the operator $KB_e$.
\end{corollary}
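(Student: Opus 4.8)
The plan is to exhibit, for each $k \geq 1$, an explicit graph $G_k$ of girth at least five whose associated ``core'' (the subgraph induced by the cycles and paths connecting cycles) is reached in exactly $k$ iterations of $KB_e$. By Theorem~\ref{conv1}, when $G$ has girth at least five, one application of $KB_e$ deletes precisely the degree-one vertices and leaves everything else unchanged; equivalently, $KB_e(G)$ is the subgraph induced by the vertices of degree at least two, which still has girth at least five. So iterating $KB_e$ on such a graph is the same as repeatedly stripping off pendant vertices, and the number of steps needed to stabilize is exactly the ``pendant depth'' of the graph --- the maximum over all vertices of the distance into the tree-like fringe before one hits a vertex lying on a cycle.

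The key steps, in order, are as follows. First I would fix a single cycle $C$ of length five (to guarantee girth at least five) to serve as the core. Next I would attach to one vertex of $C$ a path of length $k$, so that $G_k$ consists of $C_5$ with a pendant path $v_0 v_1 \cdots v_k$ where $v_0 \in V(C_5)$. Then I would check the iteration: $KB_e(G_k)$ removes the single degree-one vertex $v_k$, yielding $G_{k-1}$; by induction $KB_e^{k}(G_k) = C_5$ and $KB_e^{k-1}(G_k) \neq C_5$ since $v_1$ still has degree one in $KB_e^{k-1}(G_k) = G_1$. Finally I would invoke Theorem~\ref{conv1} to conclude $KB_e^k(G_k) = KB_e^{k+1}(G_k) = \cdots = C_5$, so $G_k$ converges in exactly $k$ steps. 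One should also note the boundary case $k=1$ is handled by taking $C_5$ with a single pendant edge, and, if one wants the statement to cover $k = 0$ implicitly, $C_5$ itself is a fixed point by Corollary~\ref{conv2}.

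The only genuinely delicate point is making sure ``converges in $k$ steps'' is interpreted as ``$k$ is the least index with $KB_e^k(G) = KB_e^{k+1}(G)$'' rather than just ``$KB_e^k(G)$ is the limit'': the construction must ensure the process has not already stabilized at step $k-1$. That is exactly why the pendant path must have length precisely $k$ and why the core $C_5$ must genuinely have no degree-one vertices --- if the attached structure were shorter, or if the core itself kept shedding vertices, the count would be off. Everything else is routine: girth at least five is immediate because the only cycle is $C_5$, and the per-step behavior is handed to us verbatim by Theorem~\ref{conv1}.
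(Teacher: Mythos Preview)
Your proof is correct and follows essentially the same construction as the paper: take a cycle $C_n$ with $n\geq 5$ (the paper allows any such $n$, you fix $n=5$) and attach a pendant path of $k$ vertices, then invoke Theorem~\ref{conv1} to see that each iteration of $KB_e$ strips off exactly the terminal pendant vertex, so stabilization occurs precisely at step $k$. Your write-up is in fact more careful than the paper's about verifying that the process does not stabilize earlier.
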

\begin{proof}
Just take any induced cycle $C_n$, $n \geq 5$, and join one of its vertices to the endpoint of a path $P_k$. Observe that this graph converges to $C_n$ in exactly $k$ steps 
(see Fig~\ref{kpasos}).
\end{proof}

\begin{figure}[ht!]	
  \centering	
  \includegraphics[scale=.25]{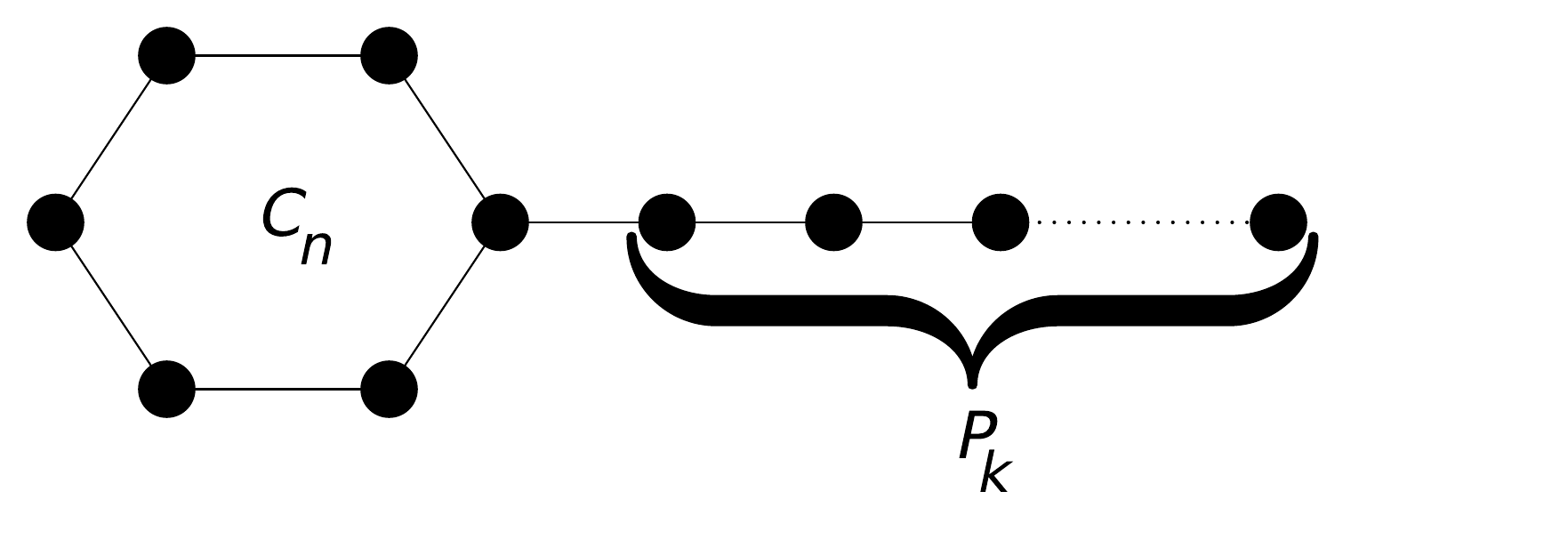} 	
  \caption{Graph $G$ that converges in $k$ steps under the operator $KB_e$}
  \label{kpasos}	
\end{figure}

\begin{corollary}
Trees converge to the empty graph under the operator $KB_e$.
\end{corollary}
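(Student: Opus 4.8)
The plan is to deduce this immediately from Theorem~\ref{conv1}. A tree is by definition acyclic, so it contains no induced cycle at all; in particular its girth is (vacuously) at least five, and Theorem~\ref{conv1} applies. That theorem asserts that the sequence $KB_e^k(G)$ converges to the subgraph of $G$ induced by the union of all cycles of $G$ together with all paths connecting two cycles of $G$. Since a tree has no cycles whatsoever, this vertex set is empty, so the limit graph is the graph on the empty vertex set, i.e.\ the empty graph. Together with the paper's stated convention that the empty graph is a fixed point of $KB_e$, this yields convergence to the empty graph, which is exactly the claim.

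If one prefers not to lean on the girth convention in the degenerate acyclic case, I would instead specialise the argument inside the proof of Theorem~\ref{conv1}: in a tree every biclique is a star $N[v]$ centred at a vertex $v$ of degree at least two, and one checks as there that $KB_e(G)$ is precisely the subgraph of $G$ induced by the vertices of degree at least two — which for a tree simply deletes all leaves. The observations to record at this point are that deleting the leaves of a finite tree again produces a (still connected) tree, because the unique path in a tree between two internal vertices never passes through a leaf, and that this smaller tree has strictly fewer vertices unless it is already empty (a nonempty finite tree always has a leaf, and the one-vertex tree is itself deleted since its vertex has degree $0<2$). Iterating, $|V(KB_e^k(G))|$ is strictly decreasing until it reaches $0$, so after finitely many steps $KB_e^k(G)$ is empty and stays empty.

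I do not expect a genuine obstacle here: the corollary is essentially a direct reading of Theorem~\ref{conv1}. The only point that warrants a line of care is the handling of the smallest trees, namely $K_1$ and $K_2$, where $KB_e$ already outputs the empty graph; one must simply note that this is consistent with the convention adopted in Section~2. Everything else is routine.
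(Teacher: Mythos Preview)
Your proposal is correct and matches the paper's approach: the paper states this corollary without proof, intending it as an immediate consequence of Theorem~\ref{conv1}, exactly as you argue. Your optional second paragraph, spelling out the leaf-deletion dynamics directly, is a harmless elaboration but not needed.
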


\section{Divergence}

In this section we study the divergence of the operator $KB_e$. We start with the following definition.

\begin{definition}
Let $G$ be a graph and let $C=v_0v_1\ldots v_{n-1}$ be an induced cycle of length $n \geq 5$. We say that $C$ has \textit{good neighbors} whenever for all vertices $v \in V(G)-C$, if $\{v_{i-1},v_{i+1}\} \subseteq N(v)$ then $v_i \in N(v)$, for $i=0,\ldots,n-1$ and all subindices taken $(mod \ n)$. (see Fig~\ref{goodvecinos}).
\end{definition}

\begin{figure}[ht!]	
  \centering	
  \includegraphics[scale=.3]{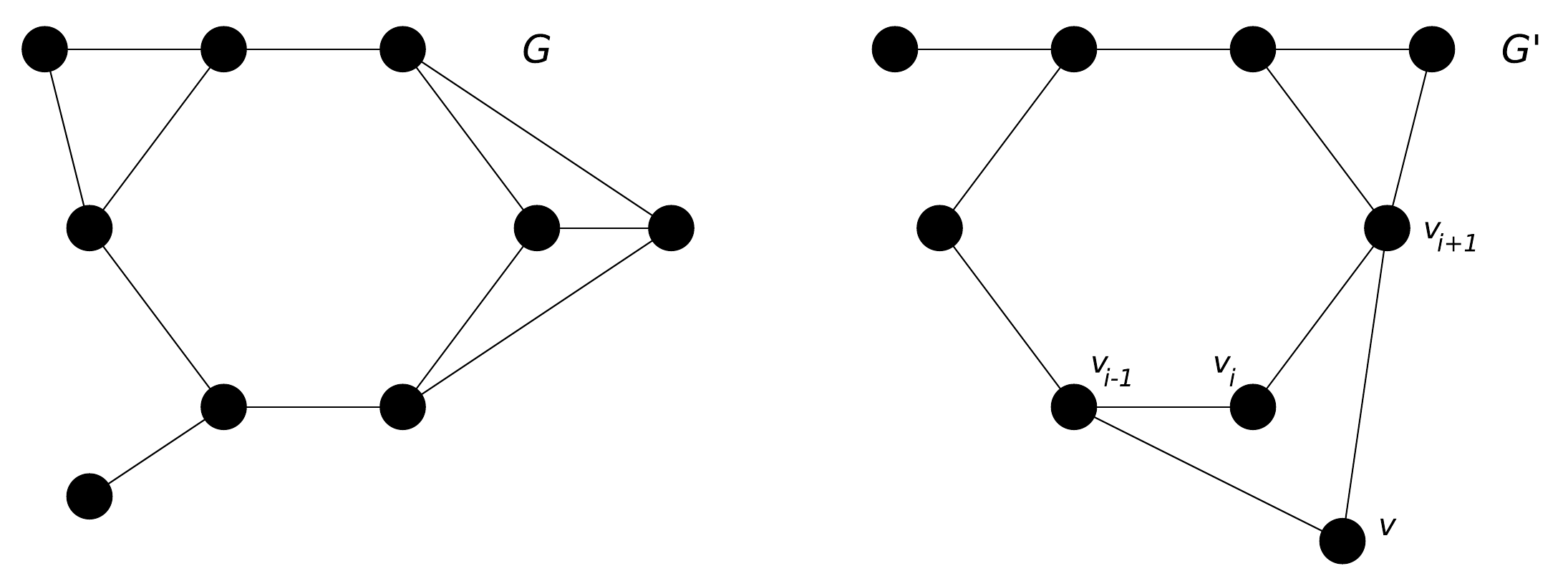} 	
  \caption{$G$ has a cycle with \textit{good neighbors} while $G'$ has not, since $v$ is adjacent to $v_{i-1}$ and $v_{i+1}$ but not adjacent to $v_i$.}
  \label{goodvecinos}	
\end{figure}

Now we present an important proposition that assures that the good neighbors property is invariant through the iterations of the operator $KB_e$.

\begin{proposition}\label{condsylvain}
Let $G$ be a graph and let $C=v_0v_1\ldots v_{n-1}$ be an induced cycle of length $n \geq 5$ with good neighbors. 
Let $B_i$, $i=0,\ldots,n-1$, be bicliques in $G$ containing the vertices $\{v_{i-1},v_i,v_{i+1}\}$ $(mod \ n)$, respectively, and let 
$b_i$, $i=0,\ldots,n-1$, be the vertices in $KB_e(G)$ corresponding to the bicliques $B_i \in G$. Then $V(B_i) \subseteq N[v_i]$ and $C'= b_0b_1\ldots b_{n-1}$ is an induced cycle of $KB_e(G)$.
Moreover, $C'$ has good neighbors.
\end{proposition}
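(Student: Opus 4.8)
The statement bundles three claims: the inclusion $V(B_i)\subseteq N[v_i]$, the fact that $C'=b_0b_1\ldots b_{n-1}$ is an induced cycle of $KB_e(G)$, and the fact that $C'$ again has good neighbors. The plan is to settle them in this order. The two structural facts I would keep in hand are that each $B_j$, being complete bipartite, is connected and triangle-free, hence has a unique bipartition in which adjacent vertices lie on opposite sides, and that $\{v_{j-1},v_j,v_{j+1}\}$ induces a path with center $v_j$ (as $C$ is induced and $n\geq5$), so that in $B_j$ the vertex $v_j$ lies on one side and $v_{j-1},v_{j+1}$ on the other.

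For $V(B_i)\subseteq N[v_i]$, take $w\in V(B_i)$ with $w\neq v_i$. If $w$ lies on the side of $v_{i-1}$ and $v_{i+1}$ then $w\sim v_i$; otherwise $w$ is adjacent to both $v_{i-1}$ and $v_{i+1}$, and then either $w\in C$, which forces $w=v_i$ because $v_i$ is the only common cycle-neighbor of $v_{i-1},v_{i+1}$ when $n\geq5$, a contradiction; or $w\notin C$, in which case the good neighbors property of $C$ gives $w\in N(v_i)$. Either way $w\in N[v_i]$. For the second claim I would verify the three requirements of an induced cycle. The $b_i$ are distinct: if $B_i=B_j$ then $\{v_{j-1},v_j,v_{j+1}\}\subseteq V(B_i)\subseteq N[v_i]$, while $N[v_i]\cap V(C)=\{v_{i-1},v_i,v_{i+1}\}$, so two blocks of three consecutive vertices of $C$ coincide and $i\equiv j\pmod{n}$. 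Consecutive $b_i,b_{i+1}$ are adjacent since $B_i$ and $B_{i+1}$ both contain the adjacent pair $v_i,v_{i+1}$ and are induced, hence both contain the edge $v_iv_{i+1}$. Finally, if $b_i\sim b_j$ with $i\neq j$ and $v_i\not\sim v_j$, a shared edge $pq$ of $B_i,B_j$ has $p,q\in V(B_i)\cap V(B_j)\subseteq N[v_i]\cap N[v_j]$; since $v_i\neq v_j$ and $v_i\not\sim v_j$, neither $p$ nor $q$ equals $v_i$ or $v_j$, so $p,q\in N(v_i)$ and $\{v_i,p,q\}$ is a triangle inside the bipartite graph $B_i$, which is impossible.

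The real content is the third claim. Let $b\notin V(C')$ correspond to a biclique $B$ with $\{b_{i-1},b_{i+1}\}\subseteq N(b)$; I must produce a common edge of $B$ and $B_i$. Choose a common edge $e_-$ of $B$ and $B_{i-1}$: its endpoints lie in $V(B_{i-1})\subseteq N[v_{i-1}]$, and they cannot both lie in $N(v_{i-1})$, for together with $v_{i-1}$ they would form a triangle in the bipartite graph $B_{i-1}$. So one endpoint of $e_-$ is $v_{i-1}$ itself, whence $v_{i-1}\in V(B)$, and writing $y_-$ for the other endpoint, $y_-\in V(B)\cap V(B_{i-1})$ and $y_-\sim v_{i-1}$. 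By the symmetric argument with $B_{i+1}$, also $v_{i+1}\in V(B)$. Since $v_{i-1}\not\sim v_{i+1}$, these two vertices lie on a common side of $B$, so $y_-$ lies on the opposite side and hence $y_-\sim v_{i+1}$ as well. Now $y_-$ is constrained from both sides: being adjacent to $v_{i-1}$ and $v_{i+1}$, it is either on $C$, in which case $y_-=v_i$ (again the only common cycle-neighbor of $v_{i-1},v_{i+1}$), or off $C$, in which case the good neighbors property of $C$ gives $y_-\sim v_i$; in all cases $y_-\in N[v_i]$. On the other hand $y_-\in V(B_{i-1})$ with $y_-\sim v_{i-1}$, which places $y_-$ on the same side of $B_{i-1}$ as $v_i$ (because $v_{i-1}v_i$ is an edge of $B_{i-1}$), so $y_-\not\sim v_i$ unless $y_-=v_i$. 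The two conclusions are compatible only if $y_-=v_i$; then $v_i\in V(B)$, the edge $v_{i-1}v_i$ belongs to both $B$ and $B_i$, and therefore $b\sim b_i$---which is precisely the good neighbors condition for $C'$.

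The only genuine obstacle is that last paragraph; everything else is bookkeeping with the unique bipartitions of $B_{i-1},B_i,B_{i+1}$, repeated use of the triangle-freeness of complete bipartite graphs, and index arithmetic modulo $n$ that stays harmless once $n\geq5$.
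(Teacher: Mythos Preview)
Your proof is correct and follows the same route as the paper: use the good-neighbors hypothesis together with triangle-freeness of bicliques to get $V(B_i)\subseteq N[v_i]$, and for the good-neighbors property of $C'$ take the other endpoint $y_-$ of the shared edge of $B$ with $B_{i-1}$, argue it is adjacent to both $v_{i-1}$ and $v_{i+1}$, and then force $y_-=v_i$ by combining the good-neighbors condition on $C$ with the bipartition of $B_{i-1}$. You in fact supply more detail than the paper does---most usefully the explicit verification that the $b_i$ are pairwise distinct and that non-consecutive $b_i,b_j$ are non-adjacent, both of which the paper asserts without argument.
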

\begin{proof}
As $C$ is an induced cycle in $G$, let $B_i$, $i=0,\ldots,n-1$, be bicliques that contain the vertices $\{v_{i-1},v_i,v_{i+1}\}$ $(mod \ n)$, respectively. Clearly, each $B_i$ intersects $B_{i+1}$
in the edge $v_iv_{i+1}$, therefore if we call $b_i$, $i=0,\ldots,n-1$, the corresponding vertices in $KB_e(G)$ to the bicliques $B_i$, then we have that $b_0b_1\ldots b_{n-1}$ form a cycle $C'$ in $KB_e(G)$.
Now, let $v \in G$ be a vertex in $B_i - \{v_{i-1}, v_i, v_{i+1}\}$. As $B_i$ is a biclique of $G$, either $v$ is adjacent to $v_{i-1}$ and $v_{i+1}$ but not adjacent to $v_i$, which is not possible because 
$C$ has good neighbors, or $v$ is adjacent to $v_i$. Therefore, for all $i =0,\ldots, n-1$, $V(B_i) \subseteq N[v_i]$ and $C'$ is an induced cycle of $KB_e(G)$.

Now, let $b \in V(KB_e(G))-C'$ be a vertex such that $\{b_{i-1}, b_{i+1}\} \subseteq N(b)$ for some $i$. If $B$ is the biclique of $G$ corresponding to the vertex $b \in KB_e(G)$, then $B$ contains 
$v_{i-1}$ and $v_{i+1}$, since $V(B_{i-1}) \subseteq N[v_{i-1}]$ and $V(B_{i+1}) \subseteq N[v_{i+1}]$. As $v_{i-1}$ and $v_{i+1}$ are not adjacent in $G$, there exists a vertex 
$v\in B \cap B_{i-1} \cap B_{i+1}$ such that $v$ is adjacent to both $v_{i-1}$ and $v_{i+1}$. If $v\neq v_i$, since $C$ has good neighbors, $v$ must also be adjacent to $v_i$, contradicting the fact 
that $v \in B_{i-1}$ (or $B_{i+1}$). Therefore, $v = v_i$ and $B$ and $B_i$ have an edge in common, that is, $b$ is adjacent to $b_i$ in $KB_e(G)$ and thus $C'$ has good neighbors. 
\end{proof}

Before the main theorem, we define the following family of graphs.
\begin{definition}
For $n \geq 3$ and $m \geq 1$, the $(n,m)-necklace$ graph on $n+m$ vertices consists of an induced cycle $C_n$ and a complete graph $K_m$, such that for an edge $e \in C_n$, every vertex of the 
$K_m$ is only adjacent to both endpoints of $e$ (see Fig~\ref{necklace}).
\end{definition}

\begin{figure}[ht!]	
  \centering	
  \includegraphics[scale=.3]{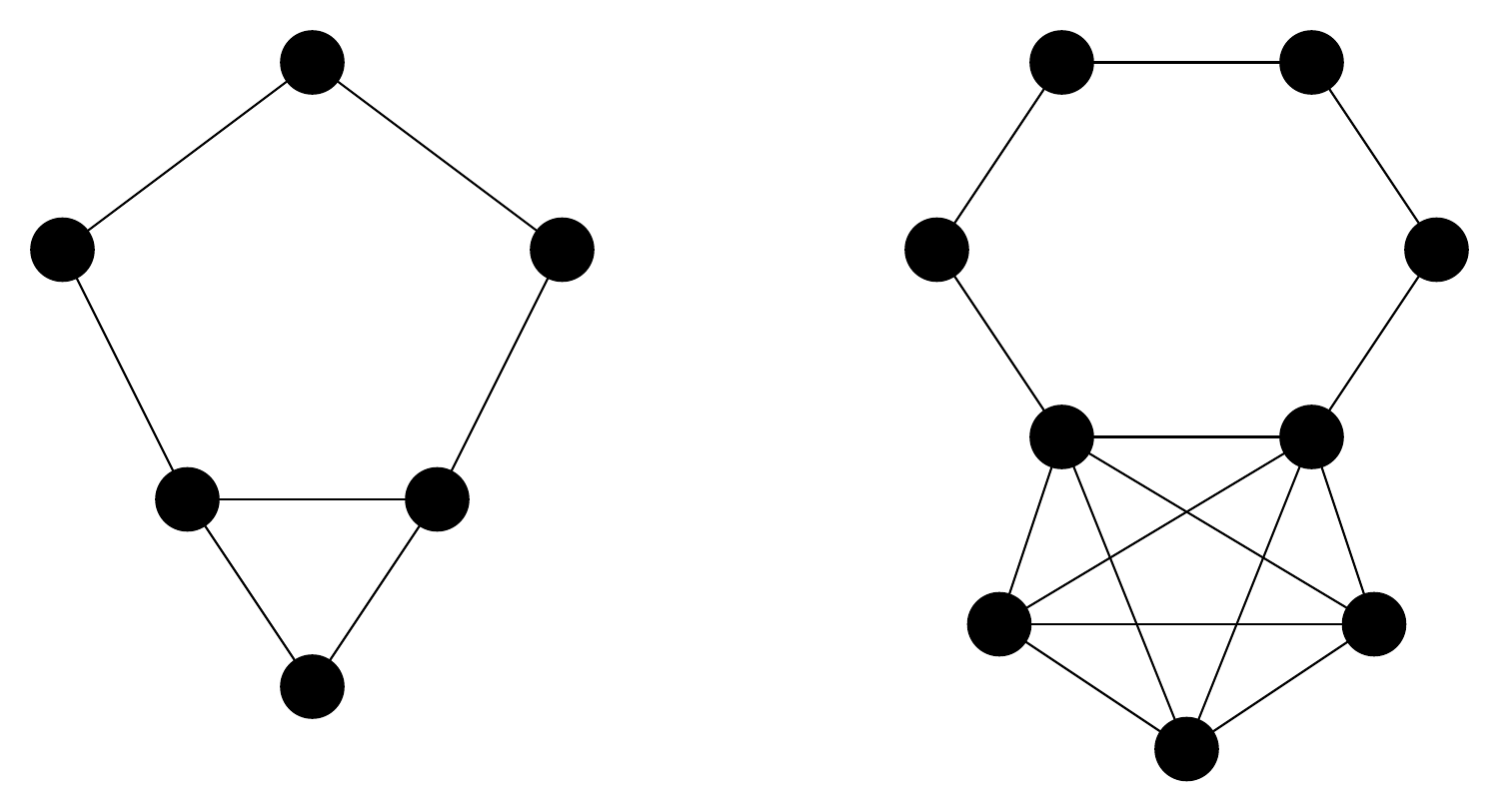} 	
  \caption{$(5,1)-necklace$ and $(6,3)-necklace$ graphs.}
  \label{necklace}	
\end{figure}

Now we present the main theorem of this section.

\begin{theorem}\label{neck}
Let $G$ be a graph that contains an induced $(n,m)-necklace$, $n \geq 5$, $m \geq 1$, such that its cycle has good neighbors. Then, either $KB^2_e(G)$ or $KB^3_e(G)$ contains an induced $(n,m')-necklace$ such that its cycle has good neighbors, and $m' > m$.
\end{theorem}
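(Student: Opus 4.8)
The plan is to track how an induced $(n,m)$-necklace $N$ in $G$ evolves under two applications of $KB_e$, using Proposition~\ref{condsylvain} to control the image of the cycle $C_n$ and a careful local analysis to recover a larger $K_{m'}$ attached to one of its edges. Write $C_n = v_0v_1\ldots v_{n-1}$ for the cycle of the necklace, $e = v_0v_1$ for its special edge, and $W = \{w_1,\ldots,w_m\}$ for the vertices of the $K_m$, each adjacent exactly to $v_0$ and $v_1$. First I would apply Proposition~\ref{condsylvain}: choosing for each $i$ a biclique $B_i \supseteq \{v_{i-1},v_i,v_{i+1}\}$ (with $V(B_i)\subseteq N[v_i]$), the vertices $b_0,\ldots,b_{n-1}$ form an induced cycle $C_n'$ in $KB_e(G)$ with good neighbors; so the cycle part is preserved "for free," and the only real work is producing a growing clique glued to one of its edges, say $b_0b_1$.

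The key observation is that the edge $e = v_0v_1$ together with the $K_m$ on $W$ forms the complete bipartite graph $K_{2,m}$-style structure, and more importantly $\{v_0,v_1\}\cup W$ spans many bicliques of $G$: for each nonempty subset choice, a biclique with parts $\{v_0,v_1\}$ (or containing these on one side) against subsets of $W$, as well as the edges $v_0v_1$, $v_0w_j$, $v_1w_j$, $w_jw_k$. All of these bicliques contain the edge $v_0v_1$ (or are forced to interact with $B_0,B_1$), so the corresponding vertices of $KB_e(G)$ are mutually edge-intersecting and edge-intersect $B_0$ and $B_1$. Concretely I would exhibit in $KB_e(G)$ an induced $(n,m_1)$-necklace with $m_1$ at least $m$, by taking the $n$-cycle $C_n'$ from Proposition~\ref{condsylvain} and, attached to the edge $b_0b_1$, the family of bicliques of $G$ of the form "edge $v_0v_1$ plus the clique $K_m$ on $W$" variants — these are the bicliques whose vertex set lies inside $N[v_0]\cap N[v_1]$-adjacent structure. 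The bookkeeping here is to check that these new vertices are adjacent to $b_0$ and $b_1$ only, not to $b_2,\ldots,b_{n-1}$, which follows because $V(B_i)\subseteq N[v_i]$ and $v_i$ is not adjacent to the $W$-vertices for $i\neq 0,1$; and that the good-neighbors property of the larger cycle is inherited, which again follows from Proposition~\ref{condsylvain} applied to $C_n'$.

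The reason two or three steps may be needed, rather than one, is that the clique on the necklace pendant does not in general strictly grow after a single $KB_e$: one first has to "unfold" the $K_m$ into the set of edges/bicliques it generates, and only after the next application do these reorganize into a genuine clique $K_{m'}$ with $m' > m$ glued to a single edge of the (good-neighbor) cycle — this is exactly the mechanism already used in the biclique-operator literature (a clique of size $\geq$ some bound forces a strictly larger clique after one or two iterations). So the middle of the argument would: (i) from the $K_m$ on $W$ plus the edge $v_0v_1$, list the bicliques of $G$ supported on $\{v_0,v_1\}\cup W$ and show their $KB_e$-images form, together with $b_0b_1$, a substructure that after one more $KB_e$ step contains $K_{m'}$ for some $m' > m$ attached to an edge of the image cycle; (ii) verify the attachment is "clean" (only the two cycle-neighbors of that edge are hit); (iii) invoke Proposition~\ref{condsylvain} once more to certify good neighbors of the resulting cycle.

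The main obstacle I expect is step (i): precisely counting and controlling the bicliques of $G$ whose vertex sets meet $\{v_0,v_1\}\cup W$, since a vertex outside the necklace could extend such a biclique in unwanted ways, and one must use the good-neighbors hypothesis (and the necklace's defining property that $W$-vertices see only $v_0,v_1$) to rule this out and to guarantee the edge-intersection pattern among the corresponding $KB_e$-vertices is exactly that of a clique — together with showing the count of these is strictly larger than $m$, which is what ultimately forces divergence. The cycle part, by contrast, is essentially handed to us by Proposition~\ref{condsylvain} and should be routine.
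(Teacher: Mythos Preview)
Your plan contains a genuine structural error in the ``key observation''. In the $(n,m)$-necklace, the set $\{v_0,v_1\}\cup W$ induces a \emph{complete graph} $K_{m+2}$: the $w_j$ are pairwise adjacent, each $w_j$ is adjacent to both $v_0$ and $v_1$, and $v_0v_1$ is an edge. A clique supports no induced complete bipartite subgraph other than single edges, so there is no family of bicliques ``with parts $\{v_0,v_1\}$ against subsets of $W$'', and no biclique of $G$ containing the edge $v_0v_1$ can contain any $w_j$ (since $w_j$ is adjacent to both endpoints). Consequently the cluster of bicliques you propose to attach to $b_0b_1$ in $KB_e(G)$ simply does not exist, and step~(i) of your plan cannot get started.

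The mechanism that actually produces growth is different and uses the \emph{boundary} between the $K_m$ and the cycle. Since $w_j$ is adjacent to $v_0$ but not to $v_{n-1}$, the triple $\{w_j,v_0,v_{n-1}\}$ is an induced $P_3$ and sits in a biclique $B^1_j$ containing the edge $v_{n-1}v_0$; symmetrically $\{w_j,v_1,v_2\}$ sits in a biclique $B^2_j$ containing $v_1v_2$. Thus after one application of $KB_e$ the pendant $K_m$ does not stay on the image of $v_0v_1$; it \emph{splits} into two $K_m$'s attached to the edges $b_{n-1}b_0$ and $b_1b_2$ of $C'_n$. In the second iteration one pushes each copy back toward the middle edge via the $P_3$'s $\{b^1_j,b_0,b_1\}$ and $\{b^2_j,b_0,b_1\}$, obtaining two families of bicliques sharing $b_0b_1$. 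If these families are genuinely distinct one gets at least $m+1$ bicliques on that edge and an $(n,m{+}1)$-necklace in $KB_e^2(G)$; if unexpected adjacencies between the two copies collapse them (the worst case being a perfect matching between $\{b^1_j\}$ and $\{b^2_j\}$), one needs a further iteration and a second pair of families on an adjacent edge to force $2m$ bicliques in $KB_e^3(G)$. This split--and--recombine picture is the missing idea; once you have it, Proposition~\ref{condsylvain} does handle the cycle and the good-neighbors property exactly as you say.
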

\begin{proof}
Let $C_n=v_0v_1\ldots v_{n-1}$ be the induced cycle and $K_m= \{w_1,\ldots,w_m\}$ be the complete graph of the $(n,m)-necklace$, respectively. Let $v_iv_{i+1}$, for some 
$i \in \{0,\ldots,n-1\}$ $(mod \ n)$, be the edge of the $C_n$ such that
$w_j$ is adjacent to $v_i$ and $v_{i+1}$ for all $j=1,\ldots,m$. Let $B_t$, $t=0,\ldots,n-1$, be bicliques that contain the vertices $\{v_{t-1},v_t,v_{t+1}\}$ $(mod \ n)$, respectively, and let 
$b_t$, $t=0,\ldots,n-1$, be the corresponding vertices in $KB_e(G)$ to the bicliques $B_t$. By Proposition~\ref{condsylvain}, $C'_n = b_0b_1\ldots b_{n-1}$ is an induced cycle in $KB_e(G)$ with good neighbors.

Consider these two families of bicliques $B^1 = \{B^1_j : \{w_j,v_i,v_{i-1}\} \subseteq B^1_j, j=1,\ldots,m\}$ and 
$B^2 = \{B^2_j : \{w_j,v_{i+1},v_{i+2}\} \subseteq B^2_j, j=1,\ldots,m\}$. Clearly, all these $2m$ bicliques are different and moreover, they are different to the bicliques $B_t$ for $t=0,\ldots,n-1$ as $C_n$ has good neighbors. Now we can see that $(\bigcap_{j=1}^m B^1_j) \cap B_{i-1} \cap B_i = \{v_{i-1},v_i\}$ and 
$(\bigcap_{j=1}^m B^2_j) \cap B_{i+1} \cap B_{i+2} = \{v_{i+1},v_{i+2}\}$.
Therefore if $b^1_j$ and $b^2_j$, $j=1,\ldots,m$, are the corresponding vertices in $KB_e(G)$ to the bicliques $B^1_j$ and $B^2_j$, we have
that in $KB_e(G)$, $K^1_m = \{b^1_1,\ldots,b^1_m\}$ and $K^2_m = \{b^2_1,\ldots,b^2_m\}$ are two complete graphs such that $b^1_j$ is adjacent to $b_{i-1}$ and $b_i$, and $b^2_j$ is adjacent to $b_{i+1}$ and $b_{i+2}$, for all $j=1,\ldots,m$. Notice that as $C_n$ has good neighbors in $G$, then in $KB_e(G)$ we have $N(b^1_j) \cap C'_n = \{b_{i-1},b_i\}$ and 
$N(b^2_j) \cap C'_n = \{b_{i+1},b_{i+2}\}$, for all $j=1,\ldots,m$ (see Fig~\ref{divergeproof2}).

\begin{figure}[ht!]	
  \centering	
  \includegraphics[scale=.4]{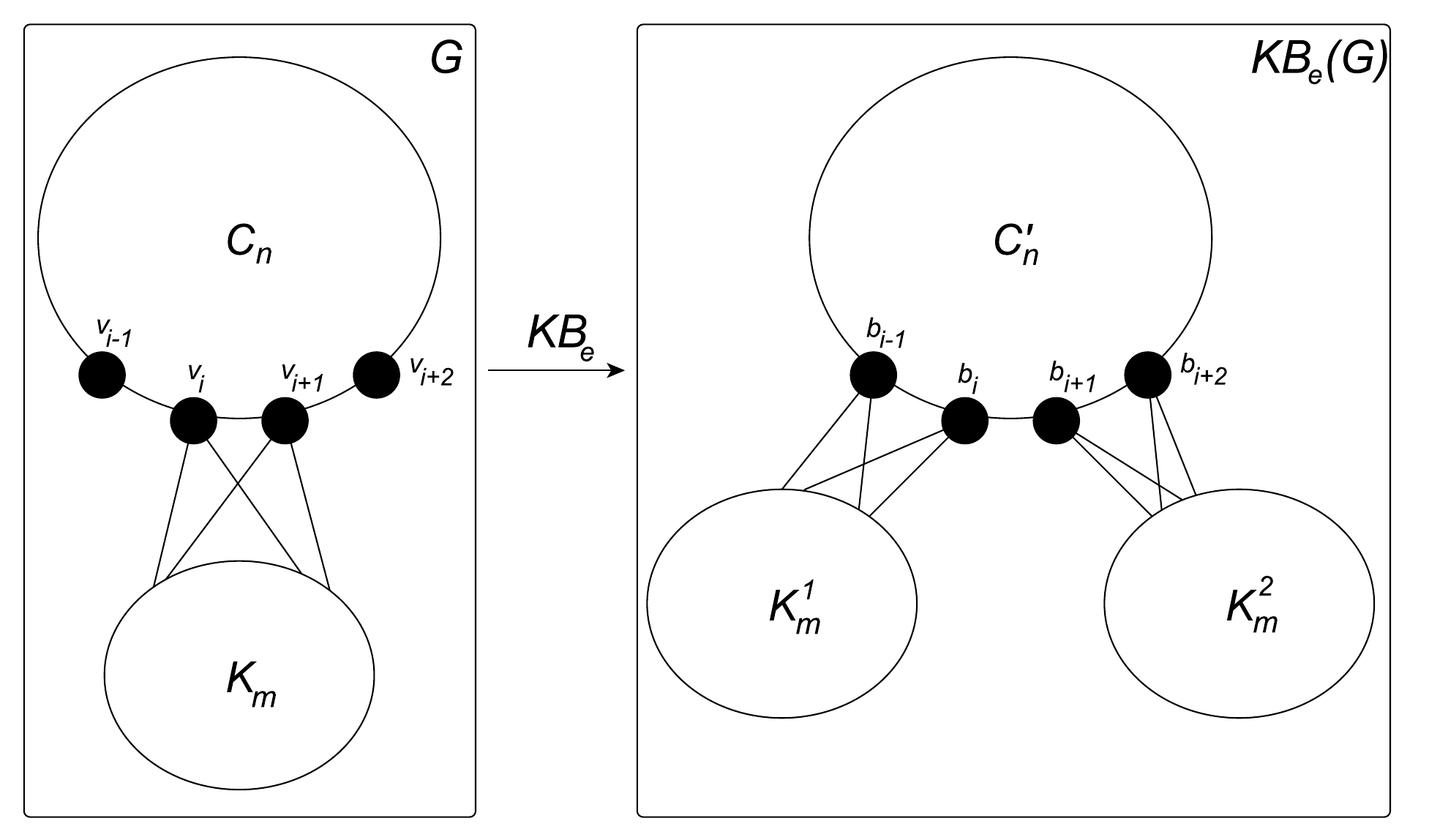} 	
  \caption{First iteration of the operator $KB_e$ applied to $G$ containing an induced $(n,m)-necklace$ with good neighbors.}
  \label{divergeproof2}	
\end{figure}

Now, let $\widetilde{B}_t$, $t=0,\ldots,n-1$, be the bicliques of $KB_e(G)$ that contain the vertices $\{b_{t-1},b_t,b_{t+1}\}$ $(mod \ n)$, respectively, and
$\widetilde{b}_t$, $t=0,\ldots,n-1$, be the corresponding vertices in $KB^2_e(G)$ to the bicliques $\widetilde{B}_t$. Again, by Proposition~\ref{condsylvain}, 
$C''_n = \widetilde{b}_0\widetilde{b}_1\ldots \widetilde{b}_{n-1}$ is an induced cycle in $KB^2_e(G)$ with good neighbors.

Now for each $b^1_j$, $j=1,\ldots,m$, we have that $\{b^1_j,b_i,b_{i+1}\}$ is contained in a biclique $\widetilde{B}^1_j$. Similarly, 
for each $b^2_j$, $j=1,\ldots,m$, $\{b^2_j,b_i,b_{i+1}\}$ is contained in a biclique $\widetilde{B}^2_j$. 
In the worst case (to minimize the number of bicliques), if there is exactly a perfect matching between $K^1_m$ and $K^2_m$, say $b^1_j$ is adjacent to $b^2_j$, for each $j=1,\ldots,m$,
then $\widetilde{B}^1_j= \widetilde{B}^2_j$. 
We have the following two cases:

\textbf{Case A}: \textit{There is at least one vertex $b^1_1 \in K^1_m$ not adjacent to any vertex of $K^2_m$.}
Clearly, the biclique $\widetilde{B}^1_1$ is different to the $m$ bicliques $\widetilde{B}^2_j$, for all $j=1,\ldots,m$, and furthermore, these $m+1$ bicliques are different to the bicliques 
$\widetilde{B}_t$ for $t=0,\ldots,n-1$. Observe that $(\bigcap_{j=1}^m \widetilde{B}^2_j) \cap {B}^1_1 = \{b_i,b_{i+1}\}$ and moreover, 
$\widetilde{B}_i \cap \widetilde{B}_{i+1} = \{b_i,b_{i+1}\}$. Therefore, if $\widetilde{b}^1_1$ and $\widetilde{b}^2_j$, $j=1,\ldots,m$, are the corresponding vertices in $KB^2_e(G)$ to the 
bicliques $\widetilde{B}^1_1$ and $\widetilde{B}^2_j$, respectively, we have that in $KB^2_e(G)$, $\{\widetilde{b}^1_1,\widetilde{b}^2_1,\ldots,\widetilde{b}^2_m\}$ is a complete graph on $m+1$ vertices such that, as $C'_n$ has good neighbors, every vertex of this $K_{m+1}$ is only adjacent to $\widetilde{b}_i$ and to $\widetilde{b}_{i+1}$ on the cycle $C''_n$. That is, $KB^2_e(G)$ 
contains an induced $(n,m+1)-necklace$ such that its cycle $C''_n$ has good neighbors. See Fig~\ref{divergeproof4}.

\begin{figure}[ht!]	
  \centering	
  \includegraphics[scale=.4]{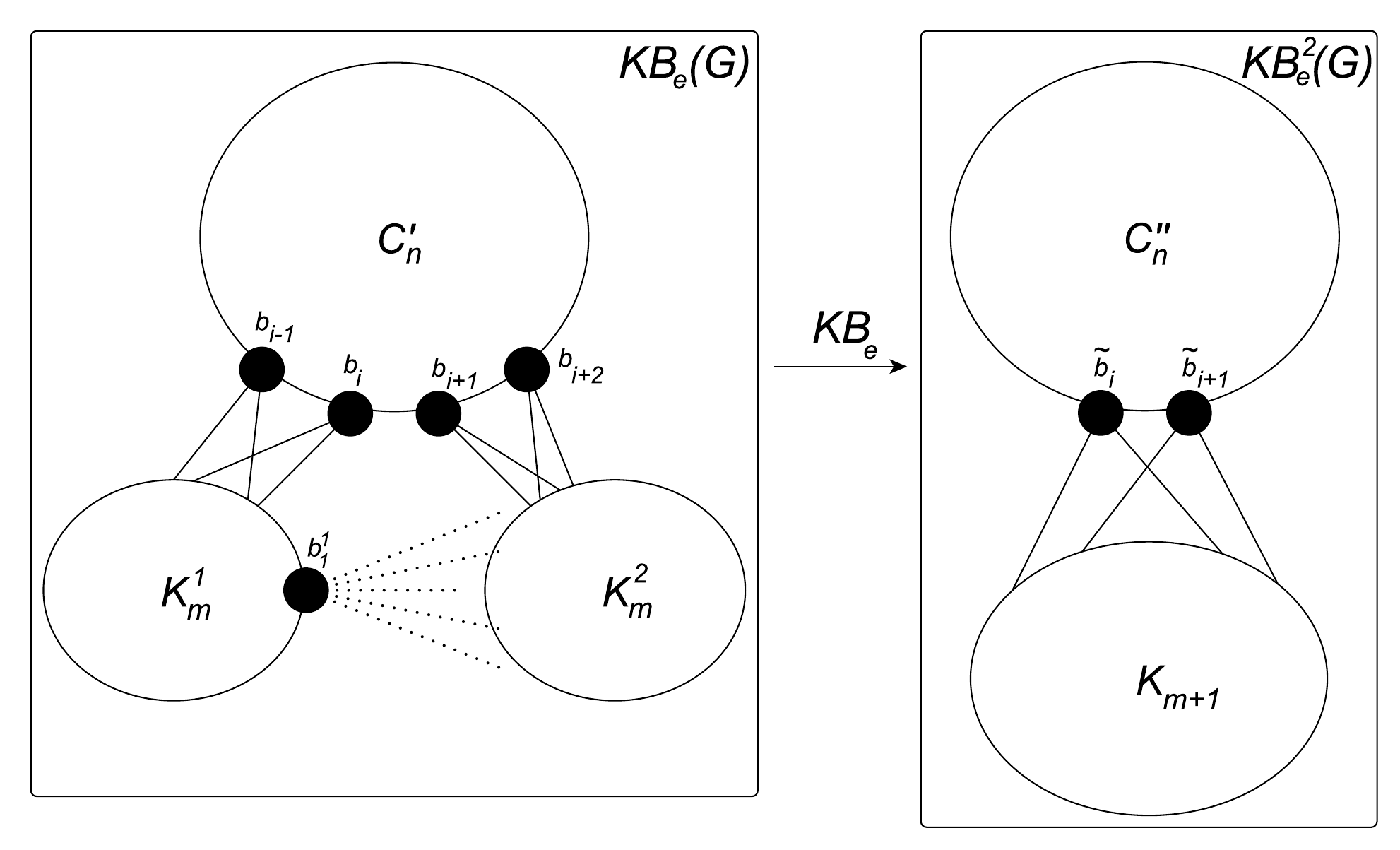} 	
  \caption{\textbf{Case A}: Second iteration of the operator $KB_e$.}
  \label{divergeproof4}	
\end{figure}

\textbf{Case B}: \textit{Every vertex of $K^1_m$ is adjacent to at least one vertex of $K^2_m$ (and by symmetry every vertex of $K^2_m$ is adjacent to at least one vertex of $K^1_m$).}
As explained above, the worst case is when there is a perfect matching between $K^1_m$ and $K^2_m$. Without loss of generality, suppose that $b^1_j$ is adjacent to $b^2_j$ for each $j=1,\ldots,m$, otherwise we would obtain at least $m+1$ bicliques having the edge $b_ib_{i+1}$ in common and therefore $KB^2_e(G)$ will contain an induced $(n,m+1)-necklace$ such that its cycle $C''_n$ has good neighbors. 
As there is a matching between $K^1_m$ and $K^2_m$, let $\widetilde{B}'_j$ be the bicliques that contain the set $\{b^1_j,b_i,b_{i+1},b^2_j\}$ for each $j=1,\ldots,m$. These bicliques contain
the edge $b_ib_{i+1}$ and they are different to the bicliques $\widetilde{B}_t$ for $t=0,\ldots,n-1$.
Then, if $\widetilde{b}'_j$, $j=1,\ldots,m$, are the corresponding vertices in $KB^2_e(G)$ to the bicliques $\widetilde{B}'_j$, we have that in $KB^2_e(G)$, 
$\{\widetilde{b}'_1,\ldots,\widetilde{b}'_m\}$ is a complete graph on $m$ vertices such that, as $C'_n$ has good neighbors, every vertex of this $K_m$ is only adjacent to $\widetilde{b}_i$ and to 
$\widetilde{b}_{i+1}$ on the cycle $C''_n$. 

Now for each $b^1_j$, $j=1,\ldots,m$, we have that $\{b^1_j,b_{i-1},b_{i-2}\}$ is contained in a biclique $\widetilde{B}^1_j$. All these $m$ bicliques have the edge $b_{i-1}b_{i-2}$ in common. In addition, they
are clearly different to the bicliques $\widetilde{B}_t$, $t=0,\ldots,n-1$ and $\widetilde{B}'_j$, $j=1,\ldots,m$. Suppose now that there is an edge in common between the bicliques, say $\widetilde{B}^1_1$
and $\widetilde{B}'_1$. Then, there must exist a vertex $b \in KB_e(G)$ adjacent to $b_{i-2},b^1_1$ and $b_{i+1}$. This implies that in $G$, there must exist a biclique $B$ 
(corresponding to the vertex $b \in KB_e(G)$) that has edges in common with the bicliques $B_{i-2},B^1_1$ and $B_{i+1}$. Therefore, as $C_n$ has good neighbors, there must exist a vertex $v \in B$ adjacent to
the vertices $v_{i-2}$ and $v_{i+1}$. Finally, as $B$ has an edge in common with the biclique $B^1_1$, $v$ must be adjacent to either to $v_i$, or to $v_{i-1}$ and $w_1$. In both cases we obtain a contradiction
as $B$ would contain either the $K_3 = \{v,v_i,v_{i+1}\}$ or the $K_3 = \{v,v_{i-2},v_{i-1}\}$ which is not possible if $B$ is a biclique. We can conclude then that there are no edges in common 
between the bicliques $\widetilde{B}^1_j$ and $\widetilde{B}'_j$, for all $j=1,\ldots,m$. Now let $\widetilde{b}^1_j$ be the vertices in $KB^2_e(G)$ corresponding to the bicliques $\widetilde{B}^1_j$ of $KB(G)$,
for $j=1,\ldots,m$ respectively. Then, these vertices form a $K_m$ in $KB^2_e(G)$ and they are only adjacent to the vertices $\widetilde{b}_{i-2}$ and $\widetilde{b}_{i-1}$ of the cycle $C''_n$. 
See Fig~\ref{divergeproof6}.

\begin{figure}[ht!]	
  \centering	
  \includegraphics[scale=.34]{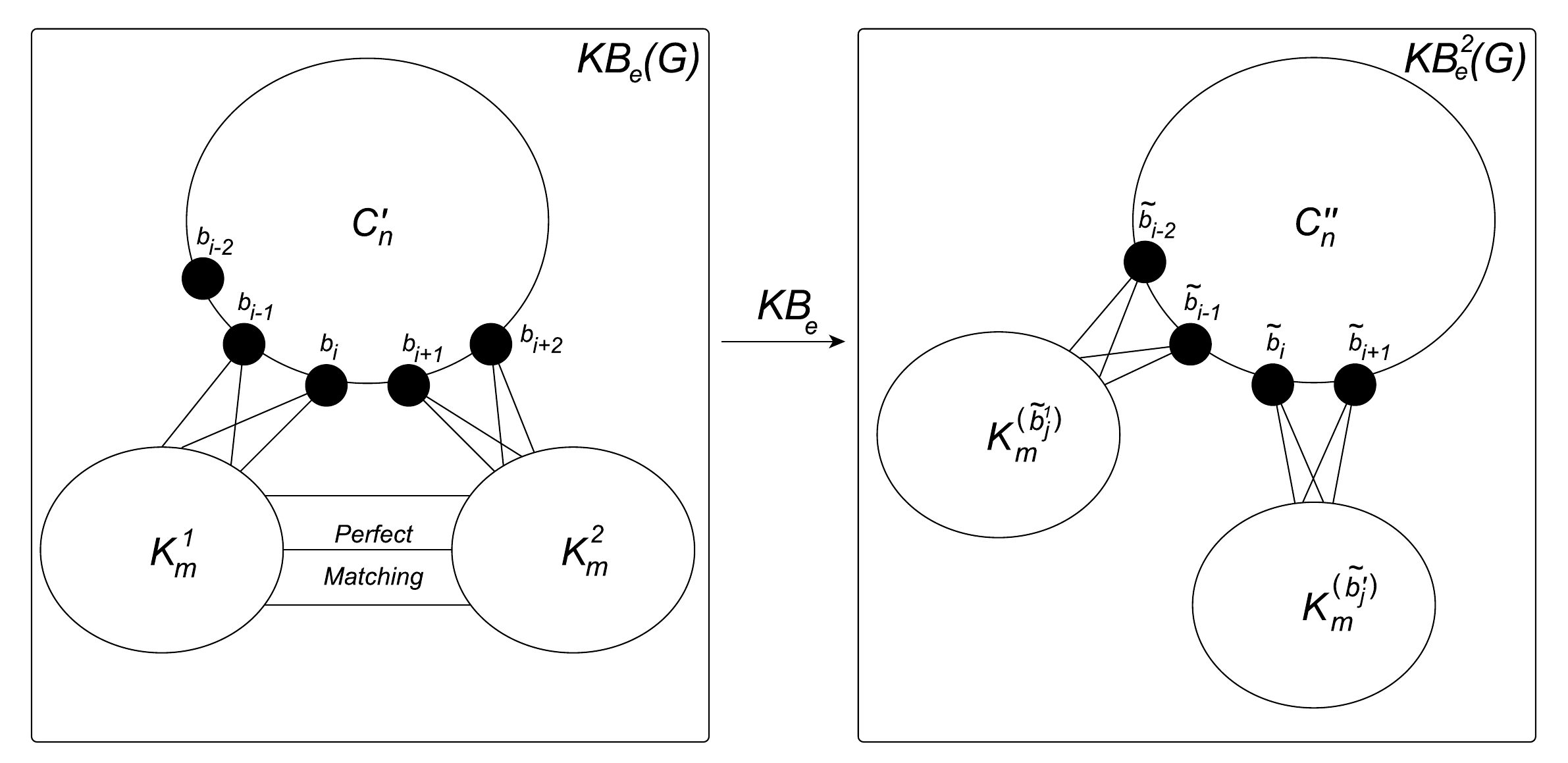} 	
  \caption{\textbf{Case B}: Second iteration of the operator $KB_e$.}
  \label{divergeproof6}	
\end{figure}

Now, let $\beta_t$, $t=0,\ldots,n-1$, be bicliques of $KB^2_e(G)$ that contain the vertices $\{\widetilde{b}_{t-1},\widetilde{b}_t,\widetilde{b}_{t+1}\}$ $(mod \ n)$, respectively, and
$\widetilde{\beta}_t$, $t=0,\ldots,n-1$, the corresponding vertices in $KB^3_e(G)$ to the bicliques $\beta_t$. By Proposition~\ref{condsylvain}, 
$C'''_n = \widetilde{\beta}_0\widetilde{\beta}_1\ldots \widetilde{\beta}_{n-1}$ is an induced cycle in $KB^3_e(G)$ with good neighbors. To finish, consider the following two families of bicliques: 
$\beta^1 = \{\beta^1_j : \{\widetilde{b}^1_j,\widetilde{b}_{i-1},\widetilde{b}_i\} \subseteq \beta^1_j, j=1,\ldots,m\}$ and 
$\beta^2 = \{\beta^2_j : \{\widetilde{b}'_j,\widetilde{b}_{i-1},\widetilde{b}_i\} \subseteq \beta^2_j, j=1,\ldots,m\}$. Clearly, all these $2m$ bicliques are different as there are no edges in common between the bicliques $\widetilde{B}^1_j$ and $\widetilde{B}'_j$, for all $j=1,\ldots,m$, and moreover, they are different to the bicliques $\beta_t$ for $t=0,\ldots,n-1$ as
$C''_n$ has good neighbors. Since all these $2m$ bicliques contain the edge $\widetilde{b}_{i-1}\widetilde{b}_i$, then if $\widetilde{\beta}^1_j$ and $\widetilde{\beta}^2_j$, $j=1,\ldots,m$, are the 
corresponding vertices in $KB^3_e(G)$ to the bicliques $\beta^1_j$ and $\beta^2_j$, respectively, we have that in $KB^3_e(G)$, 
$\{\widetilde{\beta}^1_1,\ldots,\widetilde{\beta}^1_m,\widetilde{\beta}^2_1,\ldots,\widetilde{\beta}^2_m\}$ is a complete graph on $2m$ vertices such 
that, as $C'''_n$ has good neighbors, every vertex of this $K_{2m}$ is only adjacent to $\widetilde{\beta}_i$ and to $\widetilde{\beta}_{i-1}$ on the cycle $C'''_n$. 
That is, $KB^3_e(G)$ contains an induced $(n,2m)-necklace$ such that its cycle $C'''_n$ has good neighbors. See Fig~\ref{divergeproof8}.
\end{proof}

\begin{figure}[ht!]	
  \centering	
  \includegraphics[scale=.36]{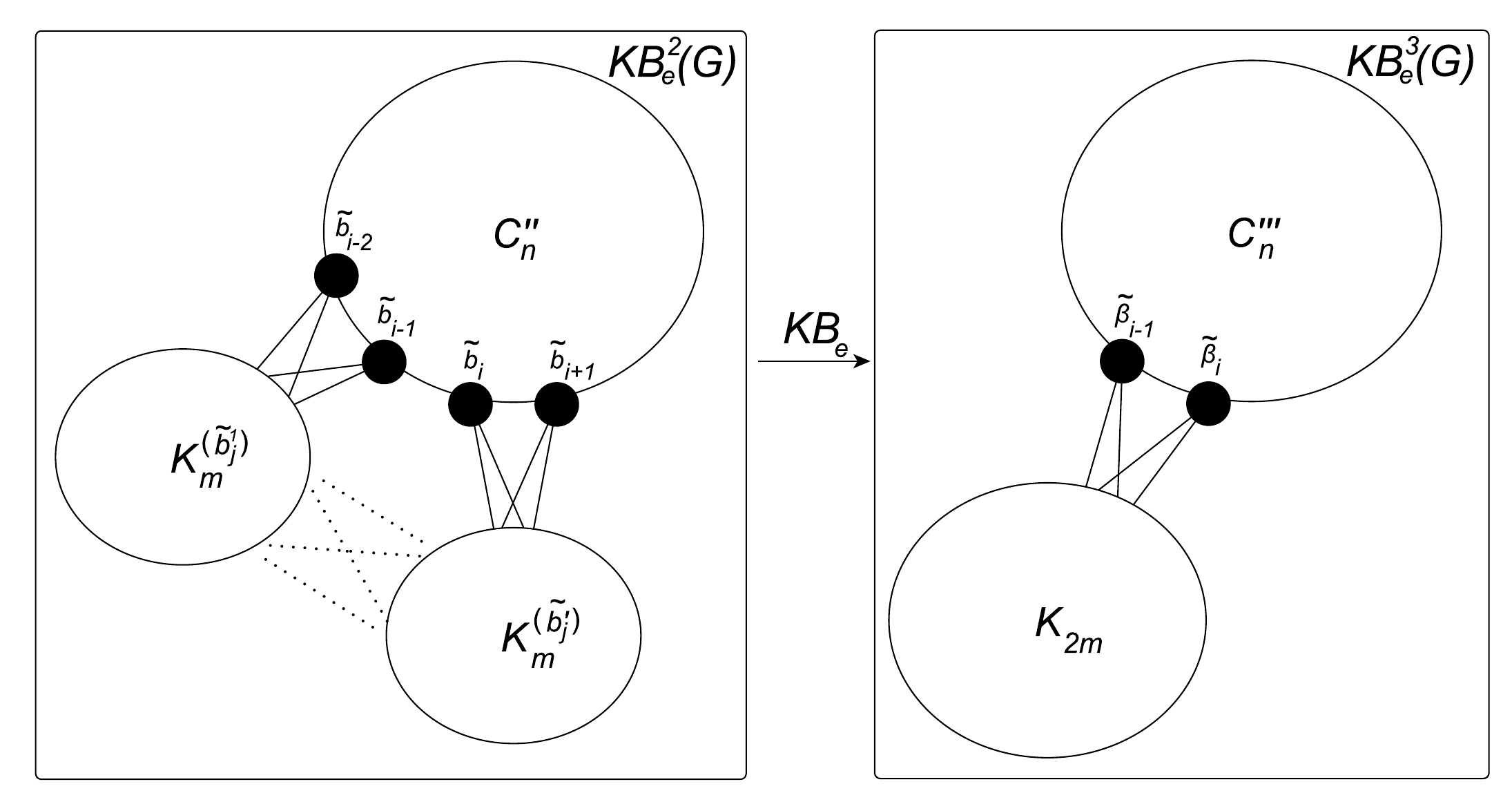} 	
  \caption{\textbf{Case B}: Third iteration of the operator $KB_e$.}
  \label{divergeproof8}	
\end{figure}

As a corollary, we obtain the following divergence theorem.

\begin{theorem}\label{neckdiverge}
Let $G$ be a graph that contains an induced $(n,m)-necklace$, $n \geq 5$, $m \geq 1$, such that its cycle has good neighbors. Then $G$ diverges under the operator $KB_e$. 
\end{theorem}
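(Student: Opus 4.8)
The plan is to derive Theorem~\ref{neckdiverge} directly from Theorem~\ref{neck} by iteration. Suppose $G$ contains an induced $(n,m)$-necklace whose cycle has good neighbors, with $n\geq 5$ and $m\geq 1$. Theorem~\ref{neck} tells us that either $KB_e^2(G)$ or $KB_e^3(G)$ contains an induced $(n,m')$-necklace, again with a cycle that has good neighbors, and with $m'>m$. The key observation is that this is exactly the hypothesis of Theorem~\ref{neck} once more, now applied to $KB_e^2(G)$ or $KB_e^3(G)$ in place of $G$. So I would set up an inductive construction: define $G_0=G$, and having obtained $G_t$ (some iterate $KB_e^{k_t}(G)$) containing an induced $(n,m_t)$-necklace with good-neighbor cycle, apply Theorem~\ref{neck} to get $G_{t+1}=KB_e^{k_{t+1}}(G)$ with $k_{t+1}\in\{k_t+2,k_t+3\}$, containing an induced $(n,m_{t+1})$-necklace with good-neighbor cycle and $m_{t+1}>m_t$. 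Since $m_t$ is a strictly increasing sequence of integers, $m_t\to\infty$.

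The next step is to translate ``contains an induced $(n,m_t)$-necklace'' into a lower bound on the number of vertices. An $(n,m)$-necklace has exactly $n+m$ vertices, and since it is an \emph{induced} subgraph of $G_t=KB_e^{k_t}(G)$, we get $|V(KB_e^{k_t}(G))|\geq n+m_t$. Because the subsequence $(k_t)_{t\geq 0}$ is strictly increasing (each step adds $2$ or $3$) and $m_t\to\infty$, we have $\limsup_{k\to\infty}|V(KB_e^k(G))|=\infty$. To conclude genuine divergence in the sense of the definition, namely $\lim_{k\to\infty}|V(KB_e^k(G))|=\infty$, I would invoke (or note) the monotonicity fact that the necklace structure, once present with good neighbors, persists: by Proposition~\ref{condsylvain} the good-neighbor cycle $C_n$ survives every application of $KB_e$, and the argument inside the proof of Theorem~\ref{neck} shows the attached clique never disappears and in fact only grows along the chosen subsequence; hence for all $k$ between $k_t$ and $k_{t+1}$ one still has an induced $(n,m_t)$-necklace with good neighbors in $KB_e^k(G)$, so $|V(KB_e^k(G))|\geq n+m_t$ for all $k\geq k_t$, giving the full limit.

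The only real subtlety — and the step I would be most careful about — is the claim that between consecutive terms $k_t$ and $k_{t+1}$ of the subsequence the vertex count does not collapse, i.e. that the iterate at step $k_t+1$ (when $k_{t+1}=k_t+3$) still has at least $n+m_t$ vertices. This follows because the proof of Theorem~\ref{neck} already exhibits, inside $KB_e(G)$ and $KB_e^2(G)$, induced cycles $C'_n,C''_n$ with good neighbors together with attached cliques $K^1_m,K^2_m$ (or $K_m$, $K_{m+1}$), so the intermediate iterates do contain a $(n,m)$-necklace-like configuration of at least $n+m$ vertices; one just needs to state this cleanly rather than re-prove it. Once that monotonicity is in hand, the proof is a two-line appeal to Theorem~\ref{neck} plus the pigeonhole-style observation that a strictly increasing integer sequence is unbounded. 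I would therefore write the proof as: apply Theorem~\ref{neck} repeatedly to build the subsequence $k_0<k_1<k_2<\cdots$ with strictly increasing clique sizes $m_0<m_1<m_2<\cdots$; note $|V(KB_e^{k_t}(G))|\geq n+m_t\to\infty$; and use persistence of the necklace across the intervening iterations to upgrade $\limsup$ to $\lim$, which is precisely the definition of $G$ diverging under $KB_e$.
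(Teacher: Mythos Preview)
Your proof is correct and follows essentially the same approach as the paper: iterate Theorem~\ref{neck} to obtain a strictly increasing sequence of clique sizes $m_0<m_1<\cdots$ along a subsequence of iterates, and conclude that $|V(KB_e^{k}(G))|\to\infty$. You are in fact more careful than the paper, which simply asserts the full limit; your observation that the proof of Theorem~\ref{neck} already exhibits an induced $(n,m)$-necklace with good-neighbor cycle in \emph{every} intermediate iterate (so the vertex count is nondecreasing along the way) is exactly what is needed to upgrade $\limsup$ to $\lim$, and the paper leaves this implicit.
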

\begin{proof}
Applying Theorem~\ref{neck} several times, we obtain that either $KB^2_e(G)$ or $KB^3_e(G)$ contains an induced $(n,m')-necklace$, and $m'>m$.
Then $KB^4_e(G)$, $KB^5_e(G)$ or $KB^6_e(G)$ contains an induced $(n,m'')-necklace$, and $m''>m'$, etc., 
all having its cycles with good neighbors. Therefore, $G$ is divergent under the operator $KB_e$ as $\lim_{k \rightarrow \infty}|V(KB^k_e(G))|=\infty$.
\end{proof}

To finish the section, we obtain a second corollary.

\begin{corollary}
Let $G$ be a graph and let $C_n$ be an induced cycle of length $n \geq 5$ with good neighbors. If there is a vertex $v \in V(G) - C_n$ such that 
$N(v) \cap C_n$ has at least one edge and not all $C_n$, then $G$ diverges under the operator $KB_e$.
\end{corollary}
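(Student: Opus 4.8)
The plan is to deduce the statement from Theorem~\ref{neckdiverge}: I will produce, either in $G$ or in $KB_e(G)$, an induced $(n',m')$-necklace with $n'\ge 5$, $m'\ge 1$, whose cycle has good neighbors. Write $A=N(v)\cap C_n$ and fix an edge $v_iv_{i+1}$ of $C_n$ with $v_i,v_{i+1}\in A$. The first case is immediate: if $A=\{v_i,v_{i+1}\}$, then the subgraph of $G$ induced by $V(C_n)\cup\{v\}$ is an induced $(n,1)$-necklace, with clique $\{v\}$ attached along $v_iv_{i+1}$, whose cycle $C_n$ has good neighbors by hypothesis, so Theorem~\ref{neckdiverge} applies.

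The real work is the case $|A|\ge 3$. First I would pick a maximal arc $R=v_av_{a+1}\cdots v_b$ of $A$ containing $v_iv_{i+1}$, so $b\ge a+1$; since $A\ne V(C_n)$ this arc is proper, hence $v_{a-1},v_{b+1}\notin A$, and applying the good-neighbors property of $C_n$ to $v$ at the indices $a-1$ and $b+1$ yields in addition $v_{a-2},v_{b+2}\notin A$ (a gap of length one would be forbidden). Next I would push one step forward: by Proposition~\ref{condsylvain}, after fixing for each $t$ a biclique $B_t\supseteq\{v_{t-1},v_t,v_{t+1}\}$, the corresponding vertices $b_t$ of $KB_e(G)$ form an induced cycle $C'=b_0b_1\cdots b_{n-1}$ of $KB_e(G)$ with good neighbors, and $V(B_t)\subseteq N[v_t]$; since one side of a biclique is an independent set, this inclusion forces every $B_t$ to be a star centered at $v_t$, so in particular the edge $v_{a-1}v_a$ belongs to both $B_{a-1}$ and $B_a$.

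Then I would identify the biclique playing the role of the extra clique vertex: let $B'$ be a biclique obtained by extending the induced path $v\,v_a\,v_{a-1}$ (induced since $v\not\sim v_{a-1}$), so that $v$ and $v_{a-1}$ lie in $B'$ on the side opposite to $v_a$, and let $b'$ be its vertex in $KB_e(G)$. As $v_{a-1}v_a\in E(B')\cap E(B_{a-1})\cap E(B_a)$, the vertex $b'$ is adjacent to the consecutive vertices $b_{a-1},b_a$ of $C'$. The heart of the argument is to show $b'$ is adjacent to no other vertex of $C'$ and does not lie on $C'$. For the first: if $b'\sim b_t$ then $B'$ shares an edge with the star $B_t$, so $v_t\in V(B')$; but every vertex on the $v_a$-side of $B'$ is adjacent to $v$ and to $v_{a-1}$, which among $v_0,\dots,v_{n-1}$ forces $v_t=v_a$ (using $v_{a-2}\notin A$ to exclude $t=a-2$), while every vertex on the opposite side is adjacent to $v_a$, which forces $v_t=v_{a-1}$ (using $v_{a+1}\in R\subseteq A$ to exclude $t=a+1$, as that side is independent and contains $v$); hence $t\in\{a-1,a\}$. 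For the second: if $B'=B_t$ then $B'$ is a star, and since the side of $B'$ holding the two distinct vertices $v$ and $v_{a-1}$ has at least two vertices, $B'$ is a star centered at $v_a$ with $v$ as a leaf; this is impossible for $B_a$ (whose leaves include $v_{a+1}$, a neighbor of $v$) and has the wrong center when $t\ne a$. Therefore $C'\cup\{b'\}$ is an induced $(n,1)$-necklace of $KB_e(G)$ whose cycle $C'$ has good neighbors, and Theorem~\ref{neckdiverge} gives that $KB_e(G)$, and hence $G$, diverges.

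The step I expect to be the main obstacle is the last one --- certifying that $b'$ is glued to $C'$ along exactly the edge $b_{a-1}b_a$ and pinning down precisely which vertices of $C_n$ can enter the biclique $B'$. This is where the two buffer vertices $v_{a-2},v_{a-1}$ supplied by the good-neighbors hypothesis, and the fact that the maximal arc $R$ has length at least two (so that $v\sim v_{a+1}$), are essential.
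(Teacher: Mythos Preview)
Your proof is correct and follows the same route as the paper: show that $KB_e(G)$ contains an induced $(n,1)$-necklace whose cycle has good neighbors, then invoke Theorem~\ref{neckdiverge}. The paper's own proof is the single sentence ``Just observe that $KB_e(G)$ satisfies conditions of Theorem~\ref{neckdiverge}'', so your write-up is essentially a (careful and correct) unpacking of that observation; the only cosmetic difference is that you split off the easy case $A=\{v_i,v_{i+1}\}$ and settle it already in $G$, whereas the paper passes to $KB_e(G)$ uniformly.
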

\begin{proof}
Just observe that $KB_e(G)$ satisfies conditions of Theorem~\ref{neckdiverge}.
\end{proof}

\section{Burgeon graphs}

In this section we will study the iterated edge-biclique graph of burgeon graphs and its relationship with the iterated line graph.

\begin{definition}
Let $G$ be a graph. We define the burgeon graph of $G$, denoted by $B(G)$, as the graph obtained by replacing each vertex $v$ of $G$ by a clique $C_v$ of $d(v)$ vertices, such that 
each vertex of the clique $C_v$ is only adjacent (to the outside of $C_v$) to exactly one vertex of another clique $C_u$ if and only if $u$ and $v$ are adjacent in $G$. See Fig~\ref{burgeon}. 
\end{definition}

\begin{figure}[ht!]	
  \centering	
  \includegraphics[scale=.36]{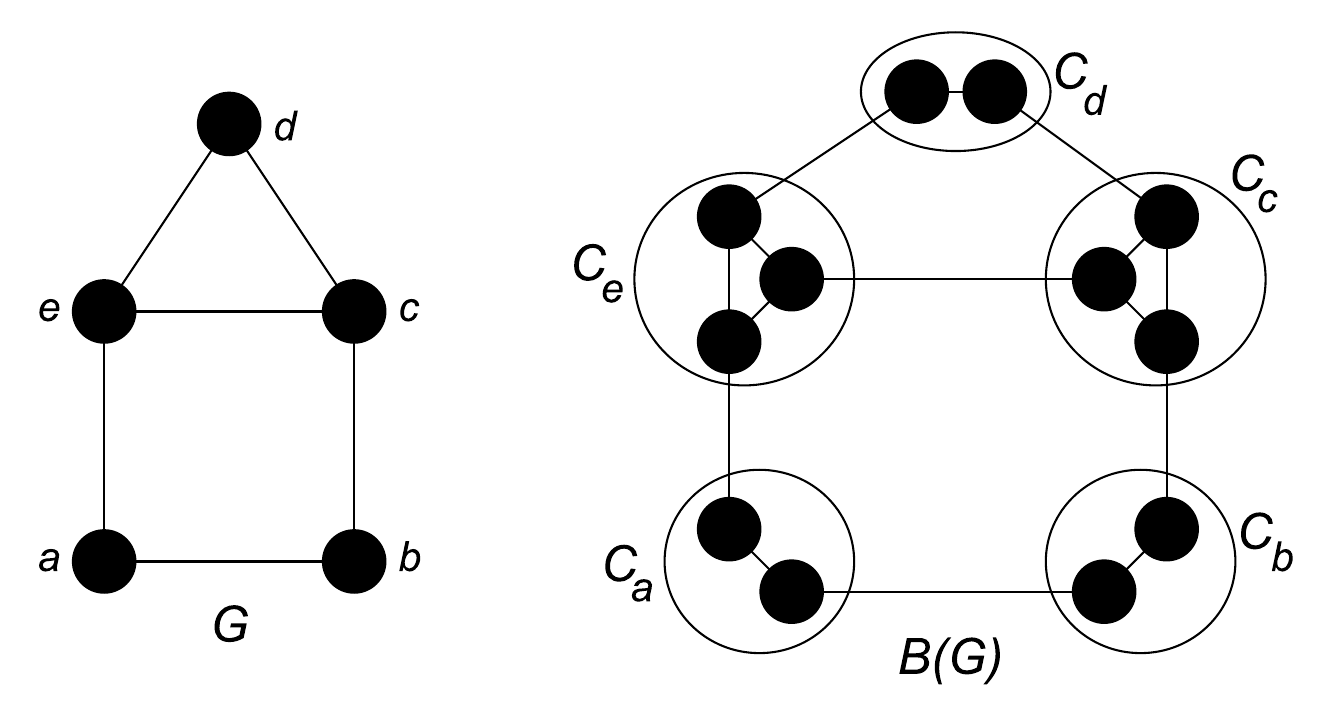} 	
  \caption{Graph $G$ and the construction of $B(G)$.}
  \label{burgeon}	
\end{figure}

Recall the definition of the line graph of a graph $G$, denoted by $L(G)$, as the intersection graph of the edges of $G$, that is, $L(G)$ has one vertex for each edge of $G$ and
two vertices $v,w$ in $L(G)$ are adjacent if their corresponding edges in $G$ have a common endpoint.
Next theorem shows the connection between the three operators $KB_e$, $B$ and $L$.

\begin{theorem}\label{burgues}
Let $G$ be a graph on $n \geq 2$ vertices. Then $KB_e(B(G)) = B(L(G))$.
\end{theorem}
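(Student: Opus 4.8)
The plan is to produce an explicit isomorphism between $KB_e(B(G))$ and $B(L(G))$; informally, a biclique of $B(G)$ will turn out to be an induced $P_3$, and such a $P_3$ encodes exactly a flag of $L(G)$. First I fix coordinates. Identify $V(B(G))$ with the set of \emph{flags} $(v,e)$ of $G$ (pairs with $v\in e\in E(G)$): two flags are adjacent iff they agree in the first coordinate (a \emph{clique edge}, inside $C_v$) or in the second (a \emph{matching edge}, joining the two ports of the $G$-edge $e$); an easy check shows $B(G)=L(S(G))$ for $S(G)$ the subdivision of $G$. Likewise identify $V(B(L(G)))$ with the flags $(e,\{e,f\})$ of $L(G)$, i.e.\ an edge $e$ of $G$ together with a distinct adjacent edge $f$; since $e$ and $f$ share a \emph{unique} endpoint $v$ in the simple graph $G$, such a flag is the same information as a triple $(v,e,f)$ with $e\neq f$ edges at $v$.

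The crucial step is to list the bicliques of $B(G)$. As $S(G)$ (the incidence graph of $G$) contains no $C_4$, the line graph $B(G)=L(S(G))$ has no induced $C_4$, hence no induced $K_{2,2}$, so every biclique of $B(G)$ is a star. For a flag $x=(v,e)$ with $e=vw$, the neighbourhood $N(x)$ induces the disjoint union of the clique on $\{(v,e'):e'\neq e\text{ an edge at }v\}$ with the single vertex $(w,e)$, whose maximal independent sets are exactly the pairs $\{(v,e'),(w,e)\}$. Hence, for $G$ connected with $n\geq 3$, the bicliques of $B(G)$ are precisely the induced paths
\[
T(v,e,e')\;=\;\{\,(w,e),\,(v,e),\,(v,e')\,\},\qquad e=vw\in E(G),\ e'\neq e\text{ an edge at }v,
\]
each a $K_{1,2}$ with centre $(v,e)$, and distinct triples $(v,e,e')$ give distinct bicliques (compare centres, then the two leaves). (If $d_G(v)=1$ the only candidate at $(v,e)$ is the edge $(v,e)(w,e)$, a maximal biclique only when $e$ is an isolated edge, which is excluded here.) Since distinct $L(G)$-flags $(e,\{e,e'\})$ correspond to distinct triples $(v,e,e')$, the assignment
\[
\phi\colon\ T(v,e,e')\ \longmapsto\ \bigl(e,\{e,e'\}\bigr)
\]
is a bijection from $V(KB_e(B(G)))$, the set of bicliques of $B(G)$, onto $V(B(L(G)))$.

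It remains to check that $\phi$ preserves and reflects adjacency. Each biclique $T(v,e,e')$ has exactly two edges: a matching edge (the unique matching edge of $B(G)$ attached to the $G$-edge $e$) and a clique edge of $C_v$ (joining the ports $(v,e)$ and $(v,e')$). The matching edges and the clique edges of $B(G)$ form two disjoint classes, so $T_1=T(v_1,e_1,e_1')$ and $T_2=T(v_2,e_2,e_2')$ share an edge of $B(G)$ iff their matching edges coincide or their clique edges coincide; that is, iff $e_1=e_2$, or else $v_1=v_2$ and $\{e_1,e_1'\}=\{e_2,e_2'\}$. Reading these through $\phi$ (which sends $T_i$ to the $L(G)$-flag $(e_i,\{e_i,e_i'\})$): the first alternative, together with $T_1\neq T_2$, says the two flags share the $L(G)$-vertex $e_1=e_2$ but have distinct $L(G)$-edges --- exactly a clique edge of $B(L(G))$ inside $C_{e_1}$ --- while the second, with $T_1\neq T_2$, says they share the $L(G)$-edge $\{e_1,e_1'\}=\{e_2,e_2'\}$ but have distinct $L(G)$-vertices --- exactly a matching edge of $B(L(G))$. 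Since these exhaust the adjacencies of $B(L(G))$, $\phi$ is an isomorphism, proving $KB_e(B(G))=B(L(G))$.

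The one genuine obstacle is the biclique enumeration: one must rule out both stars larger than $K_{1,2}$ and bicliques with both sides of size $\geq 2$ (the latter via the absence of an induced $C_4$ in $B(G)$), and handle the degree-one vertices of $G$ carefully, so that the bicliques are \emph{exactly} the paths $T(v,e,e')$. Once that list is in hand, reconciling the two flag descriptions and matching the clique and matching edges of $B(G)$ with those of $B(L(G))$ is routine bookkeeping.
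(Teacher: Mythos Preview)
Your proof is correct and follows essentially the same route as the paper: both argue that $B(G)$ has no induced $C_4$, so every biclique is a $K_{1,2}$ made of one ``matching'' (type~II) edge and one ``clique'' (type~I) edge, and then match these with the structure of $B(L(G))$. Your presentation is more explicit---labelling vertices by flags, observing $B(G)=L(S(G))$, and writing down the bijection $\phi$---whereas the paper compares the clique decompositions of the two graphs; the underlying argument is the same (and your care with the $n\geq 3$ hypothesis is warranted, since for $G=K_2$ the identity as stated actually fails).
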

\begin{proof}
Observe first that in $B(G)$ we have two types of edges. Edges of type I will be the edges inside the cliques and edges of type II will be the edges joining different cliques (these are in 
one-to-one correspondence with the edges of $G$). Now, as $B(G)$ has no induced $C_4$ and there is at most one edge of type II between each pair of cliques, we have that all bicliques of $B(G)$ are
isomorphic to $K_{1,2}$. Moreover, each biclique is formed with an edge of type I and an edge of type II sharing a common vertex.
Consider now an edge $e=vw \in G$ and its corresponding edge $e_B=v_Bw_B \in B(G)$, with $v_B \in C_v$ and $w_B \in C_w$. Note that $e_B$ is of type II. The edge $e_B$ belongs to 
exactly $d(v)-1+d(w)-1 = d(v)+d(w)-2$ bicliques in $B(G)$. First $d(v)-1$ bicliques are formed with the edge $e_B$ and each choice of an edge in $C_v$ having $v_B$ as an endpoint, while
the other $d(w)-1$ bicliques, with the edge $e_B$ and each choice of an edge in $C_w$ having $w_B$ as an endpoint. 
Since all these bicliques have the edge $e_B$ in common, their corresponding vertices in $KB_e(B(G))$ form a clique of size $d(v)+d(w)-2$. For each edge $e_B \in B(G)$ of type II, 
call $C_{e_B}$ this clique in $KB_e(B(G))$. Finally, observe that there is exactly one edge between two cliques $C_{e_B},C_{e'_B}$ of $KB_e(B(G))$ if and only if there are
two bicliques in $B(G)$ containing $e_B$ and $e'_B$ respectively, and a common edge of type I. That is, $e$ and $e'$ are adjacent in $G$.

Now, in $L(G)$, each vertex, say $e_L$ (that corresponds to an edge $e=vw$ of $G$), is adjacent to $d(v)+d(w)-2$ other vertices in $L(G)$. Thus, each vertex of $L$ will form a
clique, say $C_{e_L}$, of size $d(v)+d(w)-2$ vertices in $B(L(G))$. Finally, there is exactly one edge between two cliques $C_{e_L},C_{e'_L}$ of $B(L(G))$ if and only if the
vertices $e_L$ and $e'_L$ are adjacent in $L(G)$. That is, $e$ and $e'$ are adjacent in $G$.

We conclude therefore that $KB_e(B(G)) = B(L(G))$ as desired (see Figure~\ref{transformation}).
\end{proof}

\begin{figure}[ht!]	
  \centering	
  \includegraphics[scale=.36]{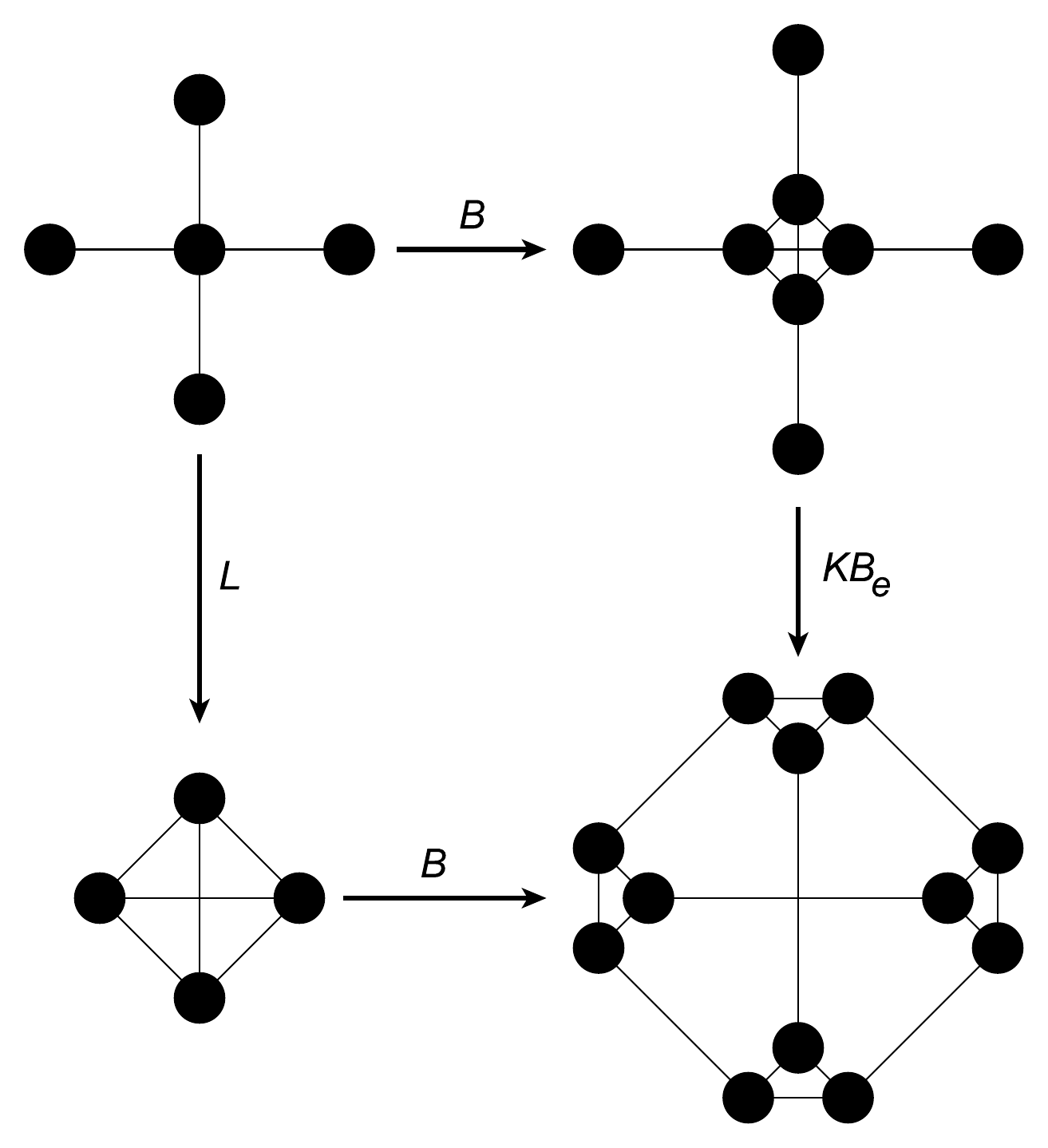} 	
  \caption{Example of the relationship of Theorem~\ref{burgues}.}
  \label{transformation}	
\end{figure}

As a corollary, we can characterize the behavior of burgeon graphs under the $KB_e$ operator.

\begin{corollary}\label{coroburgues}
Let $G$ be a graph. $B(G)$ is divergent under the $KB_e$ operator if and only if $G$ is not a cycle, a path or a $K_{1,3}$.
\end{corollary}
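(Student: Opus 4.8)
The plan is to use Theorem~\ref{burgues}, which identifies $KB_e(B(G))$ with $B(L(G))$, together with the known behavior of the iterated line graph operator, to reduce the question entirely to line graph dynamics. First I would iterate Theorem~\ref{burgues}: since $KB_e(B(G)) = B(L(G))$, applying $KB_e$ again gives $KB_e^2(B(G)) = KB_e(B(L(G))) = B(L^2(G))$, and by induction $KB_e^k(B(G)) = B(L^k(G))$ for all $k \geq 1$. Hence $B(G)$ diverges under $KB_e$ if and only if $|V(B(L^k(G)))| \to \infty$, and since $|V(B(H))| = 2|E(H)|$ for any graph $H$, this happens precisely when $|E(L^k(G))| \to \infty$, i.e.\ when $G$ diverges under the line graph operator $L$.

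Next I would invoke the classical theorem of van Rooij and Wilf on the iterated line graph: for a connected graph $G$, the sequence $L^k(G)$ either (i) diverges in size, or (ii) stabilizes because $G$ is a cycle $C_n$ (in which case $L(C_n) = C_n$), or (iii) collapses to a point / short sequence because $G$ is a path $P_n$ (in which case $L(P_n) = P_{n-1}$, eventually reaching $K_1$ and then the empty graph), or (iv) $G = K_{1,3}$, for which $L(K_{1,3}) = K_3 = C_3$ and the sequence then stabilizes. In all of cases (ii)--(iv) the edge count stays bounded, so $B(G)$ converges under $KB_e$; in case (i) the edge count is unbounded, so $B(G)$ diverges. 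This matches exactly the statement: $B(G)$ is divergent under $KB_e$ if and only if $G$ is not a cycle, a path, or $K_{1,3}$.

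A couple of small points need care in writing this up. One must check the degenerate/boundary cases so that the iteration $KB_e^k(B(G)) = B(L^k(G))$ is legitimately applicable: Theorem~\ref{burgues} requires the argument graph to have at least $2$ vertices, and $L^k(G)$ can shrink (for paths) to $K_1$ or to the empty graph, where $B(\cdot)$ and $KB_e(\cdot)$ must be interpreted consistently with the convention fixed in Section~2 that a graph with no bicliques maps to the empty graph, which is convergent. So for $G$ a path the tail of the sequence is a string of progressively shorter paths ending in $K_1$, then the empty graph, and everything is convergent as claimed. One should also note that the statement implicitly assumes $G$ connected (as do all graphs in the paper), which is what lets us quote the van Rooij--Wilf trichotomy cleanly.

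The main obstacle is not conceptual but bookkeeping: making the inductive step $KB_e(B(L^{k-1}(G))) = B(L^k(G))$ airtight at the small cases where $L^{k-1}(G)$ has fewer than two vertices or no edges, and confirming that the convergent cases of $L$ really are exactly $\{$cycles, paths, $K_{1,3}\}$ — i.e.\ citing the right form of the iterated-line-graph classification and verifying $L(K_{1,3}) = C_3$ and $L(C_n) = C_n$ by hand so the reader sees why these are genuinely the only bounded cases. Once that is in place, the corollary follows immediately by negation.
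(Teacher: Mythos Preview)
Your proposal is correct and follows essentially the same approach as the paper: iterate Theorem~\ref{burgues} to obtain $KB_e^k(B(G)) = B(L^k(G))$, then invoke the van Rooij--Wilf classification of iterated line graphs (the paper's reference~\cite{linegraph}) to conclude. The paper's version is terser---it simply notes that $B(G)$ has at least as many vertices as $G$ (for $|V(G)|\neq 2$) rather than your $|V(B(H))|=2|E(H)|$---but the logic is the same, and your extra care with the degenerate tail cases (short paths, $K_1$, empty graph) is a welcome elaboration.
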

\begin{proof}
Theorem~\ref{burgues} implies that $KB_e^n(B(G)) = B(L^n(G))$. Also, we know by~\cite{linegraph} that $G$ diverges under the $L$ operator if and only if $G$ is not a cycle, a path, or a $K_{1,3}$. 
Combining both last statements along with the fact that $B(G)$ has at least as many vertices as $G$ (for $|V(G)| \neq 2$), the result holds.
\end{proof}

\begin{corollary}\label{coroburgues2}
Let $G$ be a graph. $B(G)$ is convergent under the $KB_e$ operator if and only if $G$ is a cycle, a path or a $K_{1,3}$. Moreover, it converges to itself, to the empty graph or to $C_6$, 
respectively.
\end{corollary}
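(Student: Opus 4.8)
The plan is to build directly on Corollary~\ref{coroburgues} together with Theorem~\ref{burgues}, and then separately analyze each of the three finite families of base graphs. By Corollary~\ref{coroburgues}, $B(G)$ is convergent under $KB_e$ precisely when $G$ is a cycle, a path, or $K_{1,3}$, so the first sentence of the statement is immediate; the work is entirely in identifying the convergent limits. The key identity to exploit is $KB_e^k(B(G)) = B(L^k(G))$ from Theorem~\ref{burgues}, which reduces the question of the limit of $KB_e^k(B(G))$ to understanding the eventual behavior of $G$ under the line-graph operator $L$ and then applying $B(\cdot)$.

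First I would handle the case $G = C_n$. Here $L(C_n) = C_n$ for every $n \geq 3$, so $L^k(C_n) = C_n$ for all $k$, hence $KB_e^k(B(C_n)) = B(C_n)$ for all $k$; thus $B(C_n)$ converges to itself. (One should note that for this to make sense we need $B(C_n)$ to actually be a valid graph with $KB_e(B(C_n)) = B(C_n)$, which is exactly what Theorem~\ref{burgues} gives with $n \geq 2$ vertices — and $C_n$ has $n \geq 3 \geq 2$ vertices.) Second, for $G = P_n$ a path: $L(P_n) = P_{n-1}$, so iterating, $L^k(P_n) = P_{n-k}$ until the path degenerates. When the path shrinks to $P_2$ (a single edge), $L(P_2) = P_1$ (a single vertex), and $L(P_1)$ is the empty graph (no edges); applying $B$ to the empty graph, or tracking that $B(P_1)$ is a single vertex whose burgeon/edge-biclique image is empty, we get that $KB_e^k(B(P_n))$ is eventually the empty graph. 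Here I would be slightly careful about the very small cases ($P_1$, $P_2$) and about the degenerate hypothesis $n \geq 2$ in Theorem~\ref{burgues}, possibly invoking the convention stated in the Preliminaries that the empty graph is convergent under $KB_e$.

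Third, for $G = K_{1,3}$: the line graph $L(K_{1,3}) = K_3 = C_3$, and thereafter $L^k(K_{1,3}) = C_3$ for all $k \geq 1$. Hence $KB_e^k(B(K_{1,3})) = B(C_3) = B(C_3)$ for $k \geq 1$, i.e. $B(K_{1,3})$ converges to $B(C_3)$. Now $B(C_3)$ is the graph obtained by blowing up each vertex of a triangle into a $K_2$ (since each vertex of $C_3$ has degree $2$) and adding the three type-II edges; a direct check shows this is exactly $\overline{C_6}$ — equivalently, $B(C_3)$ is the prism-like graph on $6$ vertices which is isomorphic to $C_6$'s complement. Identifying this graph explicitly as $C_6$ (or its complement — I would double-check the paper's convention, since the statement literally says ``to $C_6$'') is the one genuinely non-mechanical step, and it is where I expect a reader might stumble: one must correctly compute $B(C_3)$ and recognize the resulting $6$-vertex graph. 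Assuming the paper intends $B(C_3)$, I would simply state $B(K_{1,3})$ converges to $B(C_3)$ and then verify by inspection of degrees and adjacencies that $B(C_3) \cong C_6$ as claimed (this is a small finite check, analogous to the $\overline{C_7}$ computation mentioned earlier in the paper). Collecting the three cases completes the proof.

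A clean way to write this up: invoke Corollary~\ref{coroburgues} for the ``if and only if'', then present the three computations $L^k(C_n) = C_n$, $L^k(P_n) \to P_1 \to \text{empty}$, and $L^k(K_{1,3}) = C_3$ for $k \geq 1$, apply $B(\cdot)$ and Theorem~\ref{burgues} to each, and finish with the remark $B(C_3) \cong C_6$.
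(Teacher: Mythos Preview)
Your approach is correct and is exactly what the paper (implicitly) intends: the corollary is stated without proof, as the direct complement of Corollary~\ref{coroburgues} together with the identity $KB_e^k(B(G)) = B(L^k(G))$ from Theorem~\ref{burgues} and the well-known behavior of $L^k$ on cycles, paths, and $K_{1,3}$.

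However, you make a genuine computational error in the $K_{1,3}$ case. You write that $B(C_3)$ is ``exactly $\overline{C_6}$'' (the triangular prism), and then hedge. In fact $B(C_3) = C_6$, not its complement. Each vertex of $C_3$ has degree~$2$, so it is replaced by a $K_2$; writing the three cliques as $\{a_1,a_2\}$, $\{b_1,b_2\}$, $\{c_1,c_2\}$ with type-II edges $a_2b_1$, $b_2c_1$, $c_2a_1$, the full edge set is precisely the $6$-cycle $a_1a_2b_1b_2c_1c_2a_1$. More generally $B(C_n) = C_{2n}$, which also confirms the ``converges to itself'' claim via Corollary~\ref{conv2} (girth $2n \geq 6$, no degree-one vertices). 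So drop the $\overline{C_6}$ claim and state directly that $B(C_3) \cong C_6$; the rest of your write-up is fine.
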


Last corollary can be stated only in terms of burgeon graphs applying the $B$ operator as follows.

\begin{corollary}\label{coroburgues2-2}
Let $G=B(H)$ for some graph $H$. $G$ is convergent under the $KB_e$ operator if and only if $G$ is a cycle, a path or the net graph (see Fig~\ref{net}). Moreover, it converges to itself, to the empty graph or to $C_6$, respectively.
\end{corollary}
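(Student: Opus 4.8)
The plan is to deduce Corollary~\ref{coroburgues2-2} directly from Corollary~\ref{coroburgues2} by re-expressing the three source graphs (cycle, path, $K_{1,3}$) in terms of the burgeon operator, under the standing hypothesis $G = B(H)$. The first step is to observe that $B$ is injective in the following sense: for connected $H$ with at least two vertices, $B(H)$ determines $H$ uniquely, since the maximal cliques $C_v$ of $B(H)$ (those of size $d(v)$) correspond to the vertices of $H$ and the type-II edges between them correspond to the edges of $H$. So if $G = B(H)$, asking whether $G$ converges is the same as asking whether $H$ is a cycle, a path or $K_{1,3}$, by Corollary~\ref{coroburgues2}.

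Next I would identify $B(\text{cycle})$, $B(\text{path})$ and $B(K_{1,3})$ explicitly. For a cycle $C_k$, every vertex has degree $2$, so each $C_v$ is a single edge $K_2$, and the construction glues these $k$ edges cyclically end-to-end; the result is again a cycle, namely $C_{2k}$ (or one checks the small cases directly). For a path $P_k$, internal vertices have degree $2$ and endpoints degree $1$, so $B(P_k)$ is again a path. For $K_{1,3}$, the centre has degree $3$ and becomes a triangle $C_c = K_3$, while each of the three leaves has degree $1$ and becomes a single vertex attached to one corner of the triangle; that graph is exactly the net graph of Figure~\ref{net}. Hence the set $\{B(H) : H \text{ is a cycle, a path, or } K_{1,3}\}$ equals $\{\text{cycles}\} \cup \{\text{paths}\} \cup \{\text{net}\}$, which is the stated characterisation. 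The ``moreover'' part about the limit ($G$ itself, the empty graph, or $C_6$) is then inherited verbatim from Corollary~\ref{coroburgues2}: $B(\text{cycle})$ converges to itself, $B(\text{path})$ to the empty graph, and $B(K_{1,3}) = \text{net}$ converges to $C_6$.

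The only genuine subtlety — and the one place to be careful — is that the restated ``if and only if'' must be an equivalence about $G$ alone, so one must rule out the possibility that some $G = B(H)$ with $H$ \emph{not} a cycle/path/$K_{1,3}$ nonetheless happens to be isomorphic to a cycle, a path, or the net. A cycle or path $G$ has maximum degree $\le 2$; the burgeon graph $B(H)$ has a vertex of degree $\ge 2$ inside the clique $C_v$ whenever $d_H(v) \ge 2$ and an extra external edge, so $B(H)$ has maximum degree $\le 2$ only when $H$ is itself a path or cycle — giving no new cases. For the net: the net has exactly one triangle, with three pendant vertices, so if $G = B(H)$ is the net, the unique non-trivial clique $C_v$ is that triangle, forcing $d_H(v) = 3$; the three pendant vertices force three degree-$1$ neighbours of $v$ in $H$; and there can be no further vertices of $H$, so $H = K_{1,3}$. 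Thus the map is a genuine bijection on the relevant classes and the equivalence transfers cleanly. I expect this injectivity/no-collision check to be the main (though still routine) obstacle; everything else is a direct translation of Corollary~\ref{coroburgues2} through the three explicit computations of $B$ on cycles, paths, and $K_{1,3}$.
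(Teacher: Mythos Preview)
Your proposal is correct and follows the same approach implicit in the paper: the paper gives no explicit proof of Corollary~\ref{coroburgues2-2}, treating it simply as a restatement of Corollary~\ref{coroburgues2} after applying the burgeon operator (``Last corollary can be stated only in terms of burgeon graphs applying the $B$ operator as follows''). You carry out that restatement in full detail, computing $B(C_k)$, $B(P_k)$, $B(K_{1,3})$ and checking that no other $H$ yields a cycle, path, or net.

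One small remark: your opening injectivity claim (``the maximal cliques $C_v$ of $B(H)$ correspond to the vertices of $H$'') is not literally true when $d_H(v)\le 2$, since then $C_v$ is an edge or a single vertex and need not be a maximal clique of $B(H)$. However, this does not affect your argument, because your final paragraph verifies the needed injectivity directly via the degree identity $d_{B(H)}(x)=d_H(v)$ for $x\in C_v$, which is correct and suffices.
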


\begin{figure}[ht!]	
  \centering	
  \includegraphics[scale=.36]{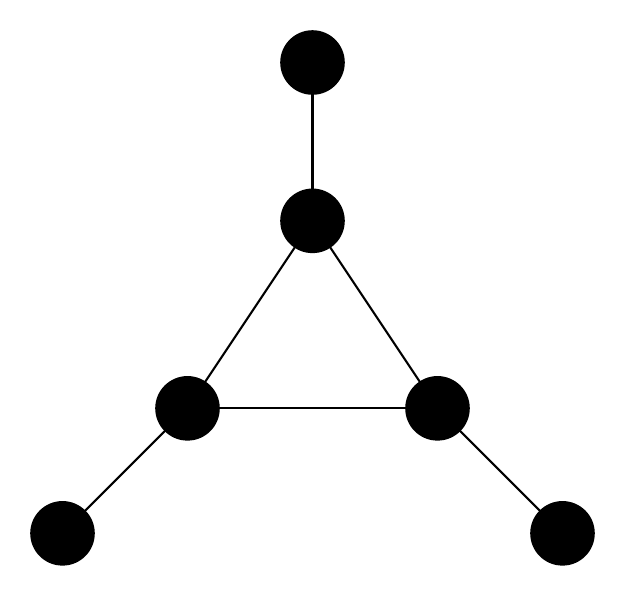} 	
  \caption{The \textit{net} graph.}
  \label{net}
\end{figure}

Note that one can verify in polynomial time if given a graph $G$, there exists some graph $H$ such that $G=B(H)$. Moreover, since checking if $G$ is a cycle, a path or the \textit{net} graph can 
also be done in polynomial time, we can conclude that deciding the behavoir of a burgeon graph under the $KB_e$ operator is polynomial as well.

We finish the section with the following result.

\begin{proposition}\label{propburguesneck}
Let $G=B(H)$ for some graph $H$. Then $KB_e(G)$ is a cycle, a path or it contains an induced $(n,m)-necklace$, $n\geq 6$ and $m\geq 1$, with good neighbors. 
\end{proposition}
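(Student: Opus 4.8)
The plan is to reduce everything to a statement about burgeon graphs via Theorem~\ref{burgues} and then analyze the structure of $B(L(H))$ directly. Since $G=B(H)$, Theorem~\ref{burgues} gives $KB_e(G)=KB_e(B(H))=B(L(H))$ (we may assume $H$ has at least two vertices, otherwise $G$ is empty), so it suffices to understand $B(F)$ for $F:=L(H)$. I would first dispose of the easy case in which $H$ is a path, a cycle, or $K_{1,3}$: then $F$ is a path or a cycle ($L(P_k)=P_{k-1}$, $L(C_k)=C_k$, $L(K_{1,3})=K_3=C_3$), and directly from the definition of the burgeon operator $B$ sends a path to a path and a cycle to a cycle; hence $KB_e(G)$ is a path or a cycle.

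In the remaining case $H$ is connected and is neither a path, a cycle, nor $K_{1,3}$, and I want to exhibit the required necklace inside $B(F)$. First, because $H$ is connected and not a path or a cycle, $\Delta(H)\ge 3$; fix $u$ with $d:=d_H(u)\ge 3$ and three edges $f_1,f_2,f_3$ incident to $u$. These three edges are pairwise adjacent in $F$, so $T=\{f_1,f_2,f_3\}$ is an induced triangle of $F$. Using $d_F(f_i)=d_H(u)+d_H(x_i)-2$, where $x_i$ is the other endpoint of $f_i$, I would argue that some $f_i$, say $f_1$, satisfies $d_F(f_1)\ge 3$: this is immediate when $d\ge 4$, and if $d=3$ then not all of $x_1,x_2,x_3$ can have degree $1$ in $H$, since otherwise $H=K_{1,3}$, which is excluded.

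Now I would describe the corresponding piece of $B(F)$. It contains the three cliques $C_{f_1},C_{f_2},C_{f_3}$ and, for each pair, exactly one ``type-II'' edge joining them; the six endpoints of these three edges induce a $6$-cycle $C''$ of $B(F)$, alternating a type-II edge with an edge inside a clique. This $C''$ is induced because between any two distinct cliques of a burgeon graph there is at most one edge, so all non-consecutive pairs of $C''$ are non-adjacent. The two vertices of $C''$ lying in $C_{f_1}$ are consecutive on $C''$, and the other $d_F(f_1)-2\ge 1$ vertices of $C_{f_1}$ form a clique, each of whose vertices is adjacent to precisely those two vertices among $V(C'')$ (its remaining neighbors are the other vertices of $C_{f_1}$, which are off $C''$, and its unique external neighbor, which lies in a clique disjoint from $C''$). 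Hence $C''$ together with this clique is an induced $(6,d_F(f_1)-2)$-necklace, so $n=6$ and $m=d_F(f_1)-2\ge 1$.

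It remains to check that $C''$ has good neighbors, which turns out to hold vacuously. The point is that every pair of vertices of $C''$ at cyclic distance $2$ lies in two distinct cliques among $C_{f_1},C_{f_2},C_{f_3}$; and if some $v\notin C''$ were adjacent to both endpoints of such a pair, then $v$ would either have to be the (unique, and already on $C''$) endpoint of a type-II edge between two of these cliques, or it would create a second edge between two of them, both impossible. Therefore no vertex outside $C''$ sees two of its vertices at distance $2$, and the good-neighbor condition is satisfied. I expect the only real work to be this bookkeeping around $C''$ — identifying which vertices of the three cliques lie on the cycle, checking it is induced, checking the extra clique attaches to exactly one edge of $C''$, and the vacuous good-neighbor verification — together with the small degree argument for $d=3$; the reduction via Theorem~\ref{burgues} and the path/cycle case are routine.
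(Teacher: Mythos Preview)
Your proof is correct, but it takes a somewhat different route from the paper's. Both arguments first reduce via Theorem~\ref{burgues} to analyzing $B(L(H))$ and dispose of the path/cycle/$K_{1,3}$ case the same way. For the remaining case, however, the paper invokes Corollaries~\ref{coroburgues} and~\ref{coroburgues2-2} to say $G$ is divergent, then observes that $H$ must contain (not necessarily induced) a \textit{paw}, a \textit{chair}, or a $K_{1,4}$; it then uses the monotonicity principle that $L$ and $B$ send subgraphs to induced subgraphs, computes $L$ of each of these three graphs, and checks that $B$ of each result already contains an $(n,m)$-necklace with $n\geq 6$. The good-neighbors property is deduced from the global fact that burgeon graphs contain no induced $C_4$. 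Your argument bypasses the paw/chair/$K_{1,4}$ trichotomy and the subgraph-to-induced-subgraph remark entirely: you locate a single vertex of degree at least $3$ in $H$, use it to produce a triangle in $L(H)$ with a vertex of degree at least $3$, and then explicitly build the $6$-cycle and the attached clique inside $B(L(H))$, checking induction and good neighbors by hand. Your approach is more self-contained and yields the necklace with $n=6$ uniformly, at the cost of more explicit bookkeeping on the six cycle vertices; the paper's approach leans on earlier structural results and a short case check, at the cost of an extra monotonicity lemma.
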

\begin{proof}
By previous results, $G$ is either divergent or convergent under the $KB_e$ operator, therefore if it is convergent, then $G$ is a cycle, a path or the \textit{net} graph, thus
$KB_e(G)$ is a cycle, a (shorter) path or a $C_6$, respectively. Now, if it is divergent, then $G$ is not a cycle, a path or the \textit{net} graph, 
therefore $H$ is not a cycle, a path or a $K_{1,3}$. This implies that $H$ contains the \textit{paw} graph, the \textit{chair} graph (see Fig~\ref{paw-chair}) or a $K_{1,4}$, not 
necessarily induced. We will show that $B(L(H))$ contains an induced $(n,m)-necklace$, $n\geq 6$ and $m\geq 1$, with good neighbors, then by Theorem~\ref{burgues}, $B(L(H)) = KB_e(B(H)) = KB_e(G)$ 
contains it as well. 
We will also use the following remark; given a graph $X$, and $X'$ a subgraph of $X$ not necessarily induced, then $L(X')$ and $B(X')$ are induced subgraphs of $L(X)$ and $B(X)$, respectively.

Observe now that $L(paw)=diamond$, $L(K_{1,4})=K_4$ and $L(chair)=paw$, and $B(diamond)$, $B(K_4)$ and $B(paw)$, contain an induced $(n,m)-necklace$, $n\geq 6$ and $m\geq 1$, therefore
following the remark, $B(L(H)) = KB_e(G)$ also contains an induced $(n,m)-necklace$, $n\geq 6$ and $m\geq 1$.

Note that the induced $(n,m)-necklace$, $n\geq 6$ and $m\geq 1$, in $KB_e(G)$ always have good neighbors, since the operator $B$ applied to any graph, never contains an induced $C_4$.

\end{proof}

\begin{figure}[ht!]	
  \centering	
  \includegraphics[scale=.36]{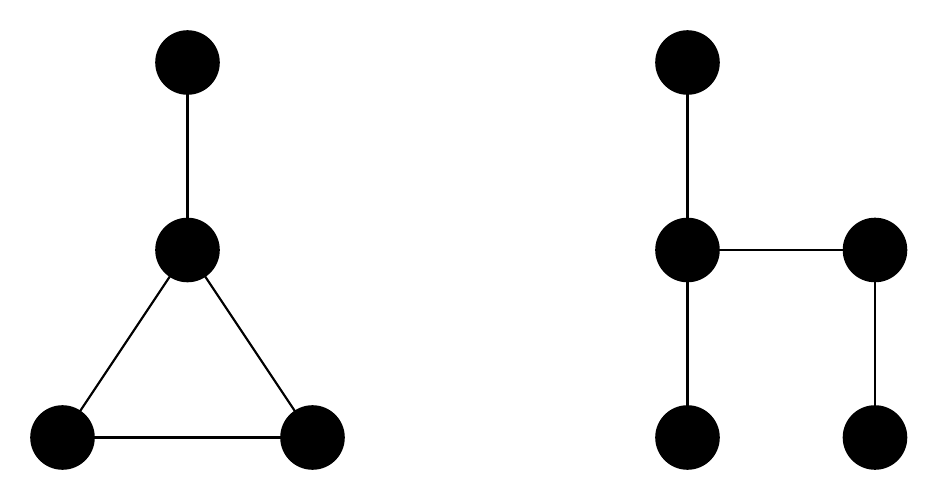} 	
  \caption{The \textit{paw} and the \textit{chair} graphs.}
  \label{paw-chair}
\end{figure}

\section{Open problems}

We propose the following conjectures.

\begin{conjecture}\label{comportamiento}
A graph $G$ is either divergent or convergent under the $KB_e$ operator but never periodic (with period bigger than $1$).
\end{conjecture}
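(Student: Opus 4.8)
The plan is to deduce the conjecture from a (still conjectural) forbidden-substructure characterization of convergence, mirroring the argument that the biclique operator $KB$ is never periodic~\cite{marinayo}. Concretely, I would aim to prove two complementary facts. First, (i): there is a family $\mathcal{F}$ of induced substructures — the induced $(n,m)$-necklaces with good neighbors of Theorem~\ref{neckdiverge} ($n\geq 5$, $m\geq 1$), together with the divergence-forcing configurations coming from line and burgeon graphs in Section~6 — such that every $F\in\mathcal{F}$ is a \emph{certificate of divergence}: its presence as an induced substructure is inherited by $KB_e^j(G)$ for some bounded $j$ and forces $|V(KB_e^k(G))|$ to be unbounded. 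For the necklace family this is exactly the content of Theorems~\ref{neck} and~\ref{neckdiverge} and Proposition~\ref{condsylvain}. Second, (ii): if no $KB_e^j(G)$ contains a member of $\mathcal{F}$, then the sequence $KB_e^k(G)$ reaches a fixed point $H$ with $KB_e(H)=H$. The substance is the converse direction (ii).

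Granting (i) and (ii), the conjecture follows by an orbit-collapse argument. Suppose $KB_e^k(G)=KB_e^{k+s}(G)$ with $s\geq 2$ minimal; then the sequence $|V(KB_e^k(G))|$ is bounded, so $G$ is not divergent, so by the contrapositive of (i) no $KB_e^j(G)$ contains a member of $\mathcal{F}$, and applying (ii) to $KB_e^k(G)$ yields $KB_e^{k+1}(G)=KB_e^k(G)$, contradicting $s\geq 2$. Since every graph either does or does not contain a member of $\mathcal{F}$, statements (i) and (ii) also exhaust the two behaviors, giving the full divergent/convergent dichotomy; periodicity with period $\geq 2$ is then excluded by the collapse above.

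For (ii) I would argue by a monovariant. A natural choice is $\Phi(G)$, the number of edges of $G$ lying on some biclique larger than $K_{1,1}$ (equivalently, the edges that generate a non-isolated vertex of $KB_e(G)$). One would show that for $\mathcal{F}$-free $G$ either $\Phi(KB_e(G))\leq\Phi(G)$, or an increase of $\Phi$ forces a member of $\mathcal{F}$ to appear within two or three iterations — the same two-to-three step bookkeeping as in the proof of Theorem~\ref{neck}. As $\Phi$ is a nonnegative integer it then stabilizes, and it remains to show that a $\Phi$-stable, $\mathcal{F}$-free graph is driven to an actual fixed graph in boundedly many steps, never entering a $2$-cycle between two distinct such graphs. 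When the stable core is triangle- and $C_4$-free this is Theorem~\ref{conv1}: $KB_e$ only erases pendant trees and leaves the cycles and the paths joining them, a monotone process with no oscillation. The remaining case — a stable core containing triangles or induced $C_4$'s but no necklace, e.g.\ graphs in the $KB_e$-orbit of $\overline{C_7}$ (Figure~\ref{grafconv}) — calls for a local description of how the bicliques of such a graph edge-intersect, that is, an extension of Corollary~\ref{conv2} to a genuine characterization of the graphs fixed by $KB_e$.

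The main obstacle is precisely this converse (ii) in the presence of triangles and $C_4$'s: a biclique of $G$ can be arbitrarily large, and two bicliques can share an edge in many incomparable ways, so — unlike for the biclique operator $KB$, where quotienting by false twins reduces $KB(G)$ to a clique on at most four vertices — there is no evident finite ``reduced form'' of $KB_e(G)$ on which to run an induction. Proving that the $(n,m)$-necklace with good neighbors is the \emph{only} obstruction to convergence (i.e.\ establishing the companion conjectures of Section~7) is therefore the real work, after which the present conjecture falls out of the orbit-collapse argument above. A secondary but genuine difficulty is that ruling out a sporadic $2$-cycle is not a finite check, since the graphs in a putative such cycle are not a priori size-bounded; one really needs the monovariant $\Phi$ to force the collapse rather than a case analysis on small graphs.
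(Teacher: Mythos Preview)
The statement you are addressing is a \emph{conjecture}, and the paper does not prove it: Conjecture~\ref{comportamiento} is listed among the open problems of Section~7, alongside Conjectures~\ref{self}, \ref{computable} and~\ref{caracdiverge}. The only class for which the paper actually establishes the divergent/convergent dichotomy is burgeon graphs, via Corollaries~\ref{coroburgues} and~\ref{coroburgues2}, as noted in the final sentence of Section~7. There is therefore no ``paper's own proof'' against which to compare your proposal.

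Your proposal is not a proof either, and you are candid about this. The orbit-collapse step is sound \emph{conditionally}: if your (i) and (ii) both held, period $s\geq 2$ would indeed be excluded. But (ii) is essentially the ``if'' direction of Conjecture~\ref{caracdiverge} together with a fixed-point characterization in the spirit of Conjecture~\ref{self}, both of which the paper leaves open. The monovariant $\Phi$ you suggest is a plausible heuristic, yet you give no argument that $\Phi(KB_e(G))\leq\Phi(G)$ for $\mathcal{F}$-free $G$, nor that $\Phi$-stability forces a genuine fixed point rather than a $2$-cycle --- and you explicitly flag both as unresolved difficulties. In short, what you have written is an honest research plan that correctly identifies the reductions the paper itself hints at, not a proof; it does not go beyond what the paper already records as open.
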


\begin{conjecture}\label{self}
$G = KB_e(G)$ if and only if $G=\overline{C_7}$, $G=G_9$ or $G$ has girth at least five and has no vertices of degree one (see Fig.~\ref{c7-g9}).
\end{conjecture}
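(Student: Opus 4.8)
The plan is to treat the two implications separately and, for the hard direction, to push the fixed-point hypothesis through the divergence machinery of Section~5 until only a finite check remains. \textbf{The ``if'' direction.} There are three families to handle. If $G$ has girth at least five and no vertex of degree one, then $KB_e(G)=G$ is exactly Corollary~\ref{conv2}, so nothing new is needed. For $G=\overline{C_7}$ and $G=G_9$ (Figure~\ref{c7-g9}) the claim is a finite verification: list all bicliques of the graph, record which pairs of bicliques share an edge, and check that the resulting edge-intersection graph is isomorphic to the original. Both graphs are small and highly symmetric, so in practice one only needs to describe the bicliques through a single edge and propagate by the automorphism group; I will not carry out this bookkeeping here.

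\textbf{The ``only if'' direction.} Assume $G$ is connected and $G=KB_e(G)$. Then $KB_e^k(G)=G$ for every $k$, so $G$ is convergent and, in particular, is \emph{not} divergent. By Theorem~\ref{neckdiverge} and the last corollary of Section~5, $G$ therefore contains no induced $(n,m)$-necklace whose cycle has good neighbors, and no induced cycle $C_n$ with $n\ge 5$ and good neighbors that carries an external vertex adjacent to two consecutive vertices of $C_n$ but not to all of $C_n$. I would then split on the girth $g$ of $G$. If $g\ge 5$, the proof of Theorem~\ref{conv1} shows that $KB_e(G)$ is precisely the subgraph of $G$ induced by the vertices of degree at least two; since $G$ is finite and coincides with that subgraph, $G$ has no vertex of degree one, which is the third alternative of the statement.

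It remains to handle $g\le 4$, where the target is $G\in\{\overline{C_7},G_9\}$. The intended route has two steps: (i) show that every fixed point of girth $3$ or $4$ has bounded order, say at most some explicit constant $N$; and (ii) enumerate all connected graphs of girth $3$ or $4$ on at most $N$ vertices and keep the fixed points — a finite, computer-assisted check. Step (i) is the crux. The idea is to track, in the spirit of the proofs of Theorems~\ref{neck} and~\ref{conexo} (Claim~1), how a triangle or an induced $C_4$ forces bicliques of shape $K_{1,2}$, $K_{2,2}$ or larger, and how their pairwise edge-intersections in $KB_e(G)$ must either create a vertex of degree exceeding anything present in $G$, or produce a $K_5$ or a necklace with good neighbors (forcing divergence, contradicting being a fixed point), or simply yield strictly more bicliques than vertices; each outcome destroys the required isomorphism $KB_e(G)\cong G$. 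The main obstacle is making ``bounded order'' quantitative: unlike the necklace argument of Theorem~\ref{neck}, there is no single canonical growing substructure to iterate, so one needs a delicate case analysis of the possible neighborhoods of a triangle or a $C_4$ inside a fixed point, and presumably a proof of Conjecture~\ref{comportamiento} to exclude spurious periodic behavior along the way. I expect this structural boundedness lemma to be exactly where the difficulty lies, which is why the statement is offered as a conjecture rather than a theorem.
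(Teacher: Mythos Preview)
The statement is a \emph{conjecture} in the paper, not a theorem; the paper does not offer a proof of it. All the paper establishes is the implication from right to left: Corollary~\ref{conv2} handles the girth-$\ge 5$ case, and the facts $KB_e(\overline{C_7})=\overline{C_7}$ and $KB_e(G_9)=G_9$ are recorded as direct computations. (The paper labels this the ``only if'' part, but the cited facts actually yield the ``if'' direction --- sufficiency of the listed conditions.) Your treatment of that direction is identical in content to the paper's.

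For the converse direction there is nothing in the paper to compare against: it is left open. Your outline --- fixed point $\Rightarrow$ convergent $\Rightarrow$ (via Theorem~\ref{neckdiverge}) no induced necklace with good neighbors, followed by a girth split --- is a reasonable first attack, and the girth-$\ge 5$ subcase is indeed settled cleanly by the proof of Theorem~\ref{conv1} exactly as you describe. Your honest acknowledgment that the girth-$\le 4$ case hinges on an unproved boundedness lemma, and that this is precisely why the statement is posed as a conjecture, is accurate. In short, there is no gap in your proposal \emph{relative to the paper}, because the paper itself has no proof of the hard direction; you have correctly separated what is established from what remains open.
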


\begin{figure}[ht!]	
  \centering	
  \includegraphics[scale=.25]{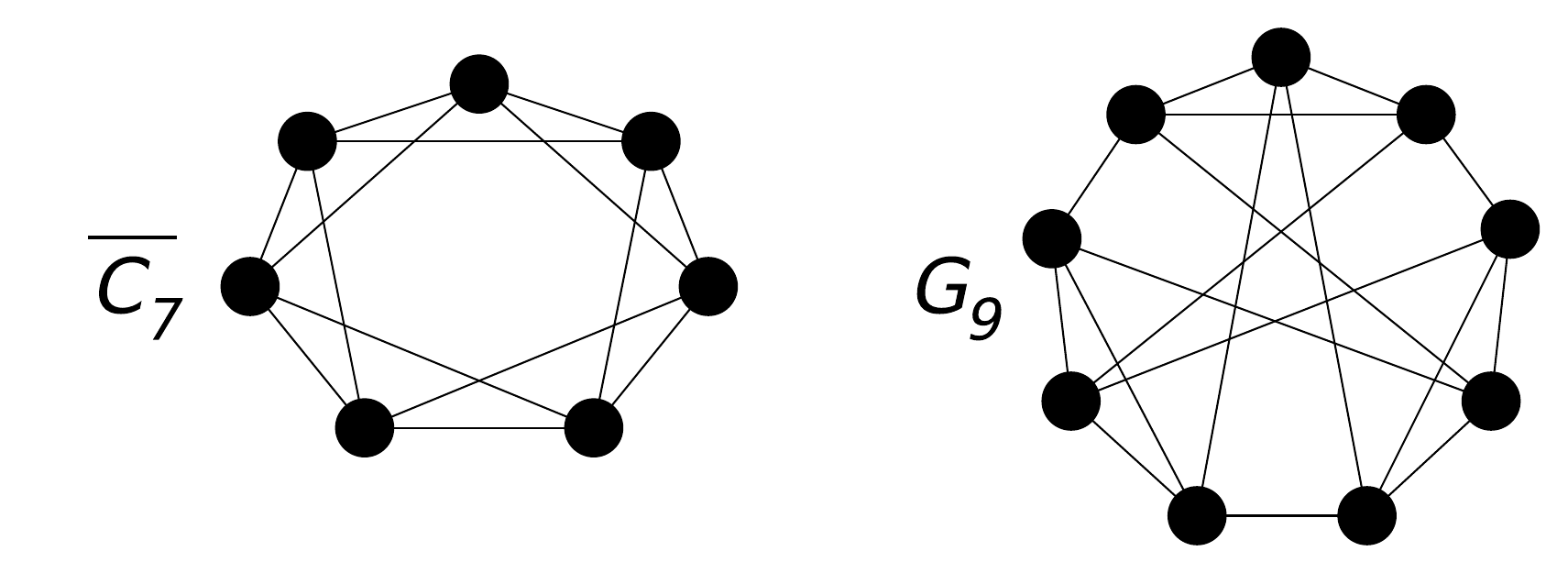} 	
  \caption{Graphs $\overline{C_7}$ and $G_9$ satisfying $KB_e(G) = G$ with girth less than five.}
  \label{c7-g9}	
\end{figure}

Note that Corollary~\ref{conv2} together with the fact that $KB_e(\overline{C_7}) = \overline{C_7}$, $KB_e(G_9) = G_9$ prove the ``only if'' part of Conjecture~\ref{self}.

\begin{conjecture}\label{computable}
It is computable to decide if a graph diverges or converges under the operator $KB_e$. 
\end{conjecture}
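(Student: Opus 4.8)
The plan is to reduce the question to a finite, effective computation by proving that the long‑term behaviour of the orbit $G, KB_e(G), KB_e^2(G),\ldots$ is always \emph{certified} after a computable number of steps. First I would record the elementary fact that for each fixed $N$ there are only finitely many graphs of order at most $N$; hence if the orbit of $G$ stays within bounded order it is eventually periodic, so it either converges (eventual period $1$) or is periodic, while an orbit of unbounded order is — in this setting, and as in all treated cases — expected to diverge in the strong sense $\lim_k |V(KB_e^k(G))|=\infty$. Thus deciding ``converges versus diverges'' amounts to deciding the trichotomy converge / diverge / periodic; granting Conjecture~\ref{comportamiento} the periodic branch is vacuous, but it is harmless for the algorithm to report it.

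The heart of the argument would be a \emph{computable growth threshold}: a computable function $c(n)$ such that if $|V(G)|=n$ and $|V(KB_e^k(G))|>c(n)$ for some $k$, then $G$ diverges. The natural tool is the necklace machinery of Section~5: by Theorem~\ref{neckdiverge} an induced $(n',m)$-necklace whose cycle has good neighbours already forces divergence, so it would suffice to show that every sufficiently large graph occurring in the orbit — or, more plausibly, occurring as $KB_e$ of a graph in the orbit, so as to exploit its one‑step structure — must contain such a necklace, or one of a finite family of further divergence‑forcing configurations. Given $c(n)$, the algorithm is immediate: iterate $KB_e$; as soon as the vertex count exceeds $c(n)$, report ``diverges''; otherwise the orbit is confined to the finitely many graphs of order at most $c(n)$, so within a number of steps bounded by that count it repeats a graph, and a repeat of period $1$ means ``converges'' (a repeat of period at least $2$ would mean ``periodic''). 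Detecting a repeat, and testing a candidate induced necklace for good neighbours, are both effective, so the procedure terminates.

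I expect the main obstacle to be precisely this growth threshold: forcing an induced necklace \emph{with good neighbours}, not merely a long induced cycle. ``Good neighbours'' is a global condition along the cycle rather than a bounded‑size forbidden subgraph, so a plain Ramsey‑type extraction of a long induced cycle does not suffice; one would have to control chords and external adjacencies, presumably by leaning on the severe structural restrictions enjoyed by graphs in the image of $KB_e$ and of $B$ — for instance, as used in Section~5 and in Proposition~\ref{propburguesneck}, the scarcity of induced $C_4$'s and the smallness of all bicliques — which should prevent a persistent ``bad neighbour'' from surviving further iterations. A second, quantitative difficulty is turning ``such a certificate eventually appears'' into ``it appears within $f(n)$ steps'' for a computable $f$; for this I would track a monotone potential along the orbit — such as the largest $m$ for which an induced $(n',m)$-necklace with good neighbours sits inside the current graph, or the size of the largest clique hanging off an edge of an induced long cycle — and argue it cannot stay constant indefinitely unless the orbit has already stabilised.
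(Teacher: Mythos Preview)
There is no proof in the paper to compare against: the statement is Conjecture~\ref{computable}, placed in Section~7 (``Open problems'') as an unresolved question. The paper's only remark about it is that the ``if'' direction of Conjecture~\ref{caracdiverge} together with Conjecture~\ref{comportamiento} would imply it; both of those are themselves open.

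Your proposal is therefore not competing with an existing argument but attempting to settle an open problem. The shape of your approach --- iterate $KB_e$, declare divergence once a computable size threshold $c(n)$ is exceeded, and otherwise wait for a repeat within the finite set of graphs of order at most $c(n)$ --- is precisely the route the paper implicitly points to via Conjectures~\ref{comportamiento} and~\ref{caracdiverge}. But, as you yourself flag, the entire content lies in establishing the threshold $c(n)$, i.e., in proving that every sufficiently large iterate must contain an induced $(n',m)$-necklace whose cycle has good neighbours (or some other effective divergence certificate). You have not done this, and the obstacles you name --- that ``good neighbours'' is a global condition along the cycle rather than a bounded forbidden subgraph, so a bare Ramsey extraction of a long induced cycle is insufficient, and that the structural restrictions on graphs in the image of $KB_e$ are not currently known to be strong enough to force such a configuration --- are exactly why the conjecture remains open. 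What you have written is an honest outline of a plausible strategy with its gaps correctly identified, not a proof.
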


Despite that it seems that a small number of graphs contain an induced $(n,m)-necklace$, $n\geq 5$, $m\geq 1$, we believe that all divergent graphs will contain one in some iteration
under the operator $KB_e$.
We propose therefore the following conjecture.

\begin{conjecture}\label{caracdiverge}
A graph $G$ is divergent under the operator $KB_e$ if and only if there exists some $k$ such that $KB^k_e(G)$ contains an induced $(n,m)-necklace$, $n\geq 5$, $m\geq 1$, with its cycle having good neighbors.
\end{conjecture}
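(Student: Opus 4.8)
The statement is a biconditional, and I would treat the two implications very differently, since one is essentially already available while the other is the genuine open content.

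The backward implication is immediate from the machinery already developed. Suppose some iterate $H = KB^k_e(G)$ contains an induced $(n,m)$-necklace with $n \geq 5$, $m \geq 1$ whose cycle has good neighbors. By Theorem~\ref{neckdiverge}, $H$ diverges under $KB_e$. Since the operator composes, $KB^j_e(H) = KB^{k+j}_e(G)$, so $\lim_{j\to\infty} |V(KB^j_e(H))| = \infty$ forces $\lim_{i\to\infty}|V(KB^i_e(G))| = \infty$, that is, $G$ diverges. Thus the entire difficulty lies in the forward implication: divergence must \emph{manufacture} a good-neighbor necklace somewhere along the orbit.

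For the forward direction I would argue by contraposition: assuming that no iterate $KB^k_e(G)$ contains a good-neighbor necklace with $n\geq 5$, I would try to prove that $G$ converges. The guiding principle is that the necklace is not merely \emph{a} source of growth but, under the good-neighbors hypothesis, the \emph{canonical} one: by Theorem~\ref{neck} its clique part multiplies every two or three steps, and by Proposition~\ref{condsylvain} the supporting cycle of length at least five survives each application of $KB_e$ as an induced cycle with good neighbors. The plan is therefore to isolate a monovariant that detects this kind of growth, for instance $\mu_k$, the maximum number of bicliques of $KB^{k}_e(G)$ sharing a single common edge (equivalently, the largest clique of $KB^{k+1}_e(G)$ arising from one fat edge). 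I would then split into two regimes. If $\mu_k$ stays bounded across all $k$, I would combine this with a Helly-type or false-twin reduction, in the spirit of the $KB$ analysis recalled in the introduction where removing false-twin vertices exposes a bounded clique core, to bound $|V(KB^k_e(G))|$ uniformly and conclude convergence. If instead $\mu_k$ is unbounded, the heart of the argument is to show that an edge lying in arbitrarily many bicliques, and persisting across iterations, must be embedded as the special edge of an induced $C_n$ ($n\geq 5$) with good neighbors, which together with those bicliques yields a genuine good-neighbor necklace, contradicting the hypothesis.

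The main obstacle is exactly this last extraction step: passing from the purely combinatorial statement ``some edge lies in many bicliques'' to the rigid induced structure ``that edge is the distinguished edge of an induced long cycle with good neighbors.'' Short supporting configurations do not behave this way, since $KB_e(K_n)$ is edgeless and so triangles collapse rather than amplify, while Theorem~\ref{conv1} shows that girth-at-least-five graphs merely converge to their cyclic core, so a fat edge cannot sit on a high-girth sparse backbone either. This squeezes the divergent behavior into the narrow window the necklace occupies, but converting that intuition into a proof requires a Ramsey-type extraction of an \emph{induced} cycle of length at least five, with the delicate good-neighbors property controlled throughout the highly non-local action of $KB_e$, and no such extraction is presently available. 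I would also flag two logical dependencies: the clean dichotomy ``not divergent implies convergent'' presumes the absence of nontrivial periods (Conjecture~\ref{comportamiento}), and the natural endpoint of the convergence branch is the fixed-point classification of Conjecture~\ref{self}; a complete proof of the present conjecture would most plausibly be obtained in tandem with those two.
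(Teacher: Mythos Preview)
This statement is a \emph{conjecture} in the paper (Section~7, Open problems), and the paper does not prove it. The paper's only remark about it is that one direction follows from Theorem~\ref{neckdiverge}; your ``backward implication'' paragraph reproduces exactly that argument and is correct. (Incidentally, the paper labels this the ``only if'' part, which appears to be a slip: in the phrasing ``$G$ diverges if and only if some iterate contains a good-neighbor necklace'', the direction supplied by Theorem~\ref{neckdiverge} is the \emph{if} part, as you correctly identify.)

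For the forward implication you do not claim a proof, and rightly so. Your sketch---contraposition, the monovariant $\mu_k$ counting bicliques through a common edge, a bounded/unbounded dichotomy, and a hoped-for Ramsey-type extraction of an induced cycle of length at least five with good neighbors---is a plausible outline of where to look, but, as you yourself flag, the extraction step is genuinely missing, and the convergence branch would further rely on Conjecture~\ref{comportamiento} to rule out nontrivial periodicity. That is precisely the status in the paper: the forward direction is open. So your proposal is not a proof of the conjecture, but it is an accurate assessment of what is established and what is not, and it matches the paper's own position.
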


Clearly Theorem~\ref{neckdiverge} proves the ``only if'' part of Conjecture~\ref{caracdiverge} and moreover, the ``if'' part along with Conjecture~\ref{comportamiento} imply Conjecture~\ref{computable}. Note that all these conjectures are true for burgeon graphs.

\bibliography{biblio}

\begin{thebibliography}{10}

\bibitem{Alc'onFariaFigueiredoGutierrez2006}
L.~Alc\'on, L.~Faria, C.~M.~H. de~Figueiredo, and M.~Gutierrez.
\newblock The complexity of clique graph recognition.
\newblock {\em Theoret. Comput. Sci.}, 410(21-23):2072--2083, 2009.

\bibitem{Atluri}
G.~Atluri, J.~Bellay, G.~Pandey, C.~Myers, and V.~Kumar.
\newblock Discovering coherent value bicliques in genetic interaction data.
\newblock In {\em Proceedings of 9th International Workshop on Data Mining in
  Bioinformatics (BIOKDD'10)}, 2000.

\bibitem{BandeltPrisnerJCTSB1991}
H.-J. Bandelt and E.~Prisner.
\newblock Clique graphs and {H}elly graphs.
\newblock {\em J. Combin. Theory Ser. B}, 51(1):34--45, 1991.

\bibitem{BoothLuekerJCSS1976}
K.~Booth and G.~Lueker.
\newblock Testing for the consecutive ones property, interval graphs, and graph
  planarity using {$PQ$}-tree algorithms.
\newblock {\em J. Comput. System Sci.}, 13(3):335--379, 1976.

\bibitem{BrandstadtLeSpinrad1999}
A.~Brandst{\"a}dt, V.~Le, and J.~P. Spinrad.
\newblock {\em Graph {C}lasses: a {S}urvey}.
\newblock SIAM Monographs on Discrete Mathematics and Applications. Society for
  Industrial and Applied Mathematics (SIAM), Philadelphia, PA, 1999.

\bibitem{Bu01052003}
D.~Bu, Y.~Zhao, L.~Cai, H.~Xue, X.~Zhu, H.~Lu, J.~Zhang, S.~Sun, L.~Ling,
  N.~Zhang, G.~Li, and R.~Chen.
\newblock Topological structure analysis of the protein-protein interaction
  network in budding yeast.
\newblock {\em Nucleic Acids Research}, 31(9):2443--2450, 2003.

\bibitem{MelloMorganaLiveraniDAM2006}
C.~P. de~Mello, A.~Morgana, and M.~Liverani.
\newblock The clique operator on graphs with few {$P\sb 4$}'s.
\newblock {\em Discrete Appl. Math.}, 154(3):485--492, 2006.

\bibitem{EscalanteAMSUH1973}
F.~Escalante.
\newblock \"{U}ber iterierte {C}lique-{G}raphen.
\newblock {\em Abh. Math. Sem. Univ. Hamburg}, 39:59--68, 1973.

\bibitem{burgeon1}
O.~Favaron.
\newblock Irredundance in inflated graphs.
\newblock {\em J. Graph Theory}, 28(2):97--104, 1998.

\bibitem{burgeon3}
O.~Favaron.
\newblock Inflated graphs with equal independence number and upper irredundance
  number.
\newblock {\em Discrete Mathematics}, 236(1):81--94, 2001.

\bibitem{Frias-ArmentaNeumann-LaraPizanaDM2004}
M.~E. Fr{\'i}as-Armenta, V.~Neumann-Lara, and M.~A. Piza{\~n}a.
\newblock Dismantlings and iterated clique graphs.
\newblock {\em Discrete Math.}, 282(1-3):263--265, 2004.

\bibitem{FulkersonGrossPJM1965}
D.~R. Fulkerson and O.~A. Gross.
\newblock Incidence matrices and interval graphs.
\newblock {\em Pacific J. Math.}, 15:835--855, 1965.

\bibitem{GavrilJCTSB1974}
F.~Gavril.
\newblock The intersection graphs of subtrees in trees are exactly the chordal
  graphs.
\newblock {\em J. Combinatorial Theory Ser. B}, 16:47--56, 1974.

\bibitem{algorapide}
M.~Groshaus, A.~L. Guedes, and L.~Montero.
\newblock Almost every graph is divergent under the biclique operator.
\newblock {\em Discrete Appl. Math.}, 201:130 -- 140, 2016.

\bibitem{pavol}
M.~Groshaus, P.~Hell, and J.~Stacho.
\newblock On edge-sets of bicliques in graphs.
\newblock {\em Discrete Appl. Math.}, 160(18):2698 -- 2708, 2012.

\bibitem{marinayo}
M.~Groshaus and L.~Montero.
\newblock On the iterated biclique operator.
\newblock {\em J. Graph Theory}, 73(2):181--190, 2013.

\bibitem{arxivdist}
M.~Groshaus and L.~Montero.
\newblock Structural properties of biclique graphs and the distance formula.
\newblock {\em CoRR}, abs/1708.09686v5, 2021.

\bibitem{GroshausSzwarcfiterJGT2010}
M.~Groshaus and J.~L. Szwarcfiter.
\newblock Biclique graphs and biclique matrices.
\newblock {\em J. Graph Theory}, 63(1):1--16, 2010.

\bibitem{tesismarina}
M.~E. Groshaus.
\newblock {\em Bicliques, cliques, neighborhoods y la propiedad de Helly}.
\newblock PhD thesis, Universidad de Buenos Aires, 2006.

\bibitem{Haemers200156}
W.~H. Haemers.
\newblock Bicliques and eigenvalues.
\newblock {\em J. Combinatorial Theory Ser. B}, 82(1):56 -- 66, 2001.

\bibitem{HamelinkJCT1968}
R.~C. Hamelink.
\newblock A partial characterization of clique graphs.
\newblock {\em J. Combinatorial Theory}, 5:192--197, 1968.

\bibitem{HedetniemiSlater1972}
S.~T. Hedetniemi and P.~J. Slater.
\newblock Line graphs of triangleless graphs and iterated clique graphs.
\newblock In {\em Graph theory and applications ({P}roc. {C}onf., {W}estern
  {M}ichigan {U}niv., {K}alamazoo, {M}ich., 1972; dedicated to the memory of
  {J}. {W}. {T}. {Y}oungs)}, pages 139--147. Lecture Notes in Math., Vol. 303.
  Springer, Berlin, 1972.

\bibitem{burgeon2}
M.~A. Henning and A.~P. Kazemi.
\newblock Total domination in inflated graphs.
\newblock {\em Discrete Appl. Math.}, 160(1-2):164--169, 2012.

\bibitem{Kumar}
R.~Kumar, P.~Raghavan, S.~Rajagopalan, and A.~Tomkins.
\newblock Trawling the web for emerging cyber-communities.
\newblock In {\em Proceeding of the 8th international conference on World Wide
  Web, pages 1481--1493, 1999.}, 2000.

\bibitem{LarrionMelloMorganaNeumann-LaraPizanaDM2004}
F.~Larri{\'o}n, C.~P. de~Mello, A.~Morgana, V.~Neumann-Lara, and M.~A.
  Piza{\~n}a.
\newblock The clique operator on cographs and serial graphs.
\newblock {\em Discrete Math.}, 282(1-3):183--191, 2004.

\bibitem{LarrionNeumann-LaraGC1997}
F.~Larri{\'o}n and V.~Neumann-Lara.
\newblock A family of clique divergent graphs with linear growth.
\newblock {\em Graphs Combin.}, 13(3):263--266, 1997.

\bibitem{LarrionNeumann-LaraDM1999}
F.~Larri{\'o}n and V.~Neumann-Lara.
\newblock Clique divergent graphs with unbounded sequence of diameters.
\newblock {\em Discrete Math.}, 197/198:491--501, 1999.
\newblock 16th British Combinatorial Conference (London, 1997).

\bibitem{LarrionNeumann-LaraDM2000}
F.~Larri{\'o}n and V.~Neumann-Lara.
\newblock Locally {$C\sb 6$} graphs are clique divergent.
\newblock {\em Discrete Math.}, 215(1-3):159--170, 2000.

\bibitem{LarrionNeumann-LaraPizanaDM2002}
F.~Larri{\'o}n, V.~Neumann-Lara, and M.~A. Piza{\~n}a.
\newblock Whitney triangulations, local girth and iterated clique graphs.
\newblock {\em Discrete Math.}, 258(1-3):123--135, 2002.

\bibitem{LarrionPizanaVillarroel-FloresDM2008}
F.~Larri{\'o}n, M.~A. Piza{\~n}a, and R.~Villarroel-Flores.
\newblock Equivariant collapses and the homotopy type of iterated clique
  graphs.
\newblock {\em Discrete Math.}, 308:3199--3207, 2008.

\bibitem{LehotJA1974}
P.~G.~H. Lehot.
\newblock An optimal algorithm to detect a line graph and output its root
  graph.
\newblock {\em J. ACM}, 21(4):569--575, 1974.

\bibitem{blablamec}
G.~Liu, K.~Sim, and J.~Li.
\newblock Efficient mining of large maximal bicliques.
\newblock In {\em Proceedings of the 8th International Conference on Data
  Warehousing and Knowledge Discovery}, DaWaK'06, page 437–448, Berlin,
  Heidelberg, 2006. Springer-Verlag.

\bibitem{McKeeMcMorris1999}
T.~A. McKee and F.~R. McMorris.
\newblock {\em Topics in {I}ntersection {G}raph {T}heory}.
\newblock SIAM Monographs on Discrete Mathematics and Applications. Society for
  Industrial and Applied Mathematics (SIAM), Philadelphia, PA, 1999.

\bibitem{tesislea}
L.~Montero.
\newblock Convergencia y divergencia del grafo biclique iterado.
\newblock Master's thesis, Departamento de Computaci\'on, Facultad de Ciencias
  Exactas y Naturales, Universidad de Buenos Aires, 2008.

\bibitem{Niranjan}
N.~Nagarajan and C.~Kingsford.
\newblock Uncovering genomic reassortments among influenza strains by
  enumerating maximal bicliques.
\newblock {\em 2012 IEEE International Conference on Bioinformatics and
  Biomedicine}, 0:223--230, 2008.

\bibitem{Neumann1981}
V.~Neumann~Lara.
\newblock Clique divergence in graphs.
\newblock In {\em Algebraic methods in graph theory, {V}ol. {I}, {II}
  ({S}zeged, 1978)}, volume~25 of {\em Colloq. Math. Soc. J\'anos Bolyai},
  pages 563--569. North-Holland, Amsterdam, 1981.

\bibitem{PizanaDM2003}
M.~A. Piza{\~n}a.
\newblock The icosahedron is clique divergent.
\newblock {\em Discrete Math.}, 262(1-3):229--239, 2003.

\bibitem{PrisnerC2000}
E.~Prisner.
\newblock Bicliques in graphs i: Bounds on their number.
\newblock {\em Combinatorica}, 20(1):109--117, 2000.

\bibitem{RobertsSpencerJCTSB1971}
F.~S. Roberts and J.~H. Spencer.
\newblock A characterization of clique graphs.
\newblock {\em J. Combinatorial Theory Ser. B}, 10:102--108, 1971.

\bibitem{Szpilrajn-MarczewskiFM1945}
E.~Szpilrajn-Marczewski.
\newblock Sur deux propri\'et\'es des classes d'ensembles.
\newblock {\em Fund. Math.}, 33:303--307, 1945.

\bibitem{linegraph}
A.~C.~M. van Rooij and H.~S. Wilf.
\newblock The interchange graph of a finite graph.
\newblock {\em Acta Math. Acad. Sci. Hungar.}, 16:263--269, 1965.

\end{thebibliography}

\end{document}